\title{On the Limits of Information Spread by Memory-less Agents} 
\author{Niccolò D'Archivio\footnote{COATI, INRIA d’Université Côte d’Azur, Sophia-Antipolis, France.} \quad and \quad Robin Vacus\footnote{Bocconi University, Bocconi Institute for Data Science and Analytics, Milan, Italy}}
\date{}
\crefname{claim}{Claim}{Claims}
\crefname{observation}{Observation}{Observations}
\crefname{equation}{Eq.}{Eqs.}
\crefname{algorithm}{Algorithm}{Algorithms}
\crefname{lemma}{Lemma}{Lemmas}
\crefname{proposition}{Proposition}{Propositions}
\crefname{corollary}{Corollary}{Corollaries}
\newtheorem{theorem}{Theorem}
\newtheorem{lemma}[theorem]{Lemma}
\newtheorem{claim}[theorem]{Claim}
\newtheorem{proposition}[theorem]{Proposition}
\newtheorem{corollary}[theorem]{Corollary}
\theoremstyle{definition}
\DeclareMathOperator{\bbN}{\mathbb{N}}
\DeclareMathOperator{\bbR}{\mathbb{R}}
\newcommand{\pa}[1]{\left( #1 \right)}
\newcommand*{\medcap}{\mathbin{\scalebox{1.5}{\ensuremath{\cap}}}}
\renewcommand{\Pr}{\mathbb{P}}
\newcommand{\E}{{\rm I\kern-.3em E}}
\newcommand{\polylog}{\mathrm{polylog}}
\newcommand{\binomial}{\mathrm{Binomial}}
\newcommand{\rinf}{r_\infty}
\newcommand{\prot}{\mathcal{P}}
\newcommand{\support}{\mathcal{S}}
\newcommand{\event}{\mathcal{E}}
\begin{document}

\maketitle

\begin{abstract}
We address the self-stabilizing bit-dissemination problem, designed to capture the challenges of spreading information and reaching consensus among entities with minimal cognitive and communication capacities.
Specifically, a group of~$n$ agents is required to adopt the correct opinion, initially held by a single informed individual, choosing from two possible opinions.
In order to make decisions, agents are restricted to observing the opinions of a few randomly sampled agents, and lack the ability to communicate further and to identify the informed individual.
Additionally, agents cannot retain any information from one round to the next.
According to a recent publication by Becchetti et al. in SODA (2024), a logarithmic convergence time without memory is achievable in the parallel setting (where agents are updated simultaneously), as long as the number of samples is at least~$\Omega(\sqrt{n \log n})$.
However, determining the minimal sample size for an efficient protocol to exist remains a challenging open question.
As a preliminary step towards an answer, we establish the first lower bound for this problem in the parallel setting. Specifically, we demonstrate that it is impossible for any memory-less protocol with constant sample size, to converge with high probability in less than an almost-linear number of rounds.
This lower bound holds even when agents are aware of both the exact value of~$n$ and their own opinion, and encompasses various simple existing dynamics designed to achieve consensus.
Beyond the bit-dissemination problem, our result sheds light on the convergence time of the ``minority'' dynamics, the counterpart of the well-known majority rule,
whose chaotic behavior is yet to be fully understood despite the apparent simplicity of the algorithm.
\end{abstract}

\section{Introduction}

Exploring the computational power and limits of well-chosen models -- 
ones that are simple enough for analytical tractability, and yet
relevant to specific biological scenarios --
can lead to insightful conclusion about the functioning of biological distributed systems~\cite{afek_biological_2011,feinerman_collaborative_2012,emek_stone_2013,ghaffari_distributed_2015,clementi_search_2021}.
In line with this approach, we consider the \textit{bit-dissemination} problem, introduced in \cite{boczkowski_minimizing_2017} in order to evaluate the possibility to solve two fundamental problems concurrently: reaching agreement efficiently, while ensuring that an information possessed initially by a single individual is propagated to the whole group.
In order to fit biological scenarios, the problem features extremely constrained communications.
Agents engage in random interactions with just a few individuals at a time, as in the $\mathcal{PULL}$ model.
Furthermore, they can only disclose their current decision and no other information, following an assumption introduced in~\cite{korman_early_2022} to model situations in which individuals do not actively communicate~\cite{danchin_public_2004}.

In this paper, we further restrict attention to \textit{memory-less} entities lacking the ability to perform computations over extended periods of time; or at least not in a sufficiently reliable manner.
In particular, this assumption precludes the possibility to maintain clocks and counters, or to estimate the tendency of the dynamics.
Biological ensembles for which it is plausible include ant colonies~\cite{feinerman_individual_2017}, slime molds, cells and bacteria~\cite{miller_quorum_2001} or even plants~\cite{trewavas_plant_2002}.

Although it is always hard to rule out the possibility that species make use of memory (especially in the case of social insects), this modeling choice remains applicable even without presupposing specific cognitive abilities, as many species often stick to simple behavioral rules when attempting to reach a consensus, such as quorum sensing~\cite{sumpter_consensus_2008} or alignment rules~\cite{vicsek_collective_2012}.
Furthermore, our goal is not to focus on one truly realistic model, but rather to find out what are the minimal requirements to perform certain tasks.

\paragraph*{Informal description of the problem.}
More precisely, we consider a group of~$n$ agents holding binary opinions.
One of the agents, referred to as the \textit{source}, knows what opinion is ``correct'' and remains with it at all times. 
Execution proceeds in discrete rounds.
We assume that agents have no memory of what happened in previous rounds, besides their current opinion.
In the \textit{parallel} setting, all non-source agents are activated simultaneously in every round, while in the \textit{sequential} setting, only one non-source agent, selected uniformly at random, is activated.
Upon activation, a non-source agent~$i$ samples a set~$S$ consisting of~$\ell$ other agents drawn uniformly at random (with replacement\footnote{Note that when~$\ell \ll n$, sampling with and without replacement are essentially equivalent, and we choose the former for the sake of convenience.}). Then, based only on the opinions of the agents in~$S$ and on its own opinion, Agent~$i$ may choose to adopt a new opinion.
In particular, Agent~$i$ does not know whether or not $S$ contains the source.
A protocol is successful if every non-source agent eventually adopts the correct opinion and remains with it forever.
Finally, a protocol must converge independently of the initial opinions of the agents (including the correct opinion), which can be thought of as being chosen by an adversary.

\paragraph*{Previous works.}
The main parameters of the problems are the activation pattern (which may be parallel or sequential) and the sample size~$\ell$.
To compare protocols across various settings, we are typically interested in their \textit{convergence time}, i.e., the total number of activations required to reach consensus w.r.t. the correct opinion. Here, we will express the convergence time in terms of \textit{parallel rounds}: one parallel round is made up of $n$ activations, which corresponds to 1 round in the parallel setting and $n$ rounds in the sequential setting (although the two settings are not equivalent).

The bit-dissemination problem without memory was first studied in \cite{becchetti_role_2023}, in the sequential setting, where nearly matching lower and upper bounds are given.
On the one hand, the authors show that no protocol can converge in less than~$\Omega(n)$ parallel rounds in expectation, regardless of the sample size. On the other hand, they show that the well-known Voter dynamics (\Cref{prot:voter} in \Cref{sec:dynamics}) achieves consensus in~$O(n \log^2 n)$ parallel rounds with high probability.
Since the Voter dynamics only needs the sample size to be~$1$, these results imply that $\ell$ is not a critical parameter in the sequential setting.

Later, the authors of~\cite{becchetti_minority_2024} show that the aforementioned lower bound does not hold in the parallel setting.
In fact, they show that the \textit{minority} dynamics (\Cref{prot:minority} in \Cref{sec:dynamics}) converges in $O(\log^2 n)$ parallel rounds w.h.p., as long as the sample size is at least~$\Omega(\sqrt{n \log n})$.
This finding reveals that the best convergence times achievable in the sequential versus parallel settings differ by an exponential factor.
The bit-dissemination problem is all the more interesting to study as it is one of the most natural ones exhibiting this property.

Part of the explanation behind this curious phenomenon is that the stochastic processes involved are of different mathematical natures, in one setting compared to the other.
In the sequential setting, the number of agents with opinion~$1$ may only vary by at most one unit in every round, since only one agent is activated at a time. Therefore, independently of the protocol being operated, the evolution of the system can always be described by a ``birth-death'' chain, i.e., a Markov chain whose underlying graph is a path of size~$n$. In fact, all proofs in~\cite{becchetti_role_2023} heavily rely on this observation.
In contrast, in the parallel setting, the process may jump from any configuration to any other -- albeit with extremely small probability.
On the one hand, this characteristic allows for fast convergence.
For instance, qualitatively speaking, the minority dynamics succeeds by first reaching a configuration in which an appropriate proportion of the agents hold the wrong opinion; after what all non-source agents, perceiving the same minority, simultaneously adopt the correct opinion.
On the other hand, it complicates the analysis of any protocol in the parallel setting, and even more so the task of deriving lower bounds.
Following on from these works, we are ultimately interested in the following question:

\begin{center}
    \textit{Is there any protocol achieving a poly-logarithmic convergence time in the parallel setting, \\ when the sample size is $o(\sqrt{n})$?}
\end{center}

Based on the findings in~\cite{becchetti_minority_2024}, the minority dynamics is a natural candidate for this task.
Despite its extreme simplicity, the conditions under which it is able to converge quickly have not been identified.
While the analysis in~\cite{becchetti_minority_2024} relies on a sample size of at least~$\Omega(\sqrt{n \log n})$, the authors do not provide a lower bound for this parameter, nor do they justify informally why this quantity is necessary.
Therefore, we also consider the following independent question.

\begin{center}
    \textit{What is the minimal sample size for which the minority dynamics \\ converges in poly-logarithmic time?}
\end{center}

As a first step towards answering these questions, we focus on the case that the sample size~$\ell$ is constant, or in other words, independent of~$n$.
Beyond analytical tractability, several reasons motivate this assumption.
First, many existing opinion dynamics that have been traditionally studied in the context of consensus are defined with a small (fixed) number of samples~\cite{becchetti_consensus_2020}. This is the case for the Voter dynamics and majority dynamics, but also for the undecided states dynamics, as well as all \textit{population protocols}~\cite{aspnes_introduction_2009}, in which agents interact by pairs.
In addition, protocols relying on a sample size that increases with~$n$ implicitly require the agents to have some knowledge about the size of the population, which is often undesirable or unrealistic in distributed systems.
Finally, real biological entities are most likely interacting with few of their conspecifics even when they are part of a larger group. As an illustration, it has been shown empirically that the movement of any bird in a flock depends mostly on its 6 or 7 closest neighbors, regardless of how many individuals are present in the vicinity~\cite{ballerini_interaction_2008,bialek_statistical_2012}, while other works point at a similar phenomenon in fishes~\cite{katz_inferring_2011}.

\subsection{Problem Definition} \label{sec:problem_definition}

We consider a finite set $I = \{1,\ldots,n\}$ of agents.
Let~$X_t^{(i)} \in \{0,1\}$ be the \textit{opinion} of Agent~$i$ in round~$t$.
We assume that Agent~$1$ is the \textit{source} and holds the \textit{correct} opinion throughout the execution.
Denoting the sample size by~$\ell$ (independent of~$n$), a protocol~$\prot$ is defined as a family of functions
\begin{equation*}
    g_n^{[b]} : \{0,\ldots,\ell\} \rightarrow [0,1].
\end{equation*}
For a given population size~$n$, an opinion~$b$, and a sample of opinions containing $k$ '1' out of~$\ell$, $g_n^{[b]}(k)$ gives the probability that an agent adopts opinion~$1$ in the next round when operating the corresponding protocol.
Specifically, the process is obtained by performing the following two steps for every agent~$i$ simultaneously, in every round~$t$:
\begin{enumerate}
    \item A vector~$S_t^{(i)} \in I^\ell$ of size~$\ell$ is sampled uniformly at random (u.a.r.) (the same agent may appear several times in $S_t^{(i)}$, and~$i$ may appear in it as well).
    \item Writing~$k_t^{(i)}$ to denote the number of agents with opinion $1$ in $S_t^{(i)}$, Agent~$i$ updates its opinion according to
    \begin{equation*}
        X_{t+1}^{(i)} \leftarrow 1 \text{ with probability } g_n^{\left[X_t^{(i)}\right]}(k_t^{(i)}), \quad 0 \text{ otherwise}.
    \end{equation*}
\end{enumerate}
For the sake of clarity, we note a few important consequences of this definition.
\begin{itemize}
    \item Non-source agents do not know where the opinions that they observe come from. In particular, they do not know if $S_t^{(i)}$ contains the source.
    \item Non-source agents do not have identifiers, or in other words, all of them must run exactly the same update rule. They are also not aware of the round number (indices are used for analysis purposes only).
    \item However, non-source agents are aware of their current opinion, as well as the exact value of~$n$.
    \item Besides their opinion, non-source agents have no memory, in the sense that their behavior cannot depend on any information from previous rounds.
\end{itemize}
Since agents have no memory besides their opinion, and no identifiers, the configuration of the system in round~$t$ can be described simply by a pair~$(z,X_t)$, where $z \in \{0,1\}$ denotes the correct opinion, and~$X_t \in \{0,\ldots,n\}$ denotes the number of agents with opinion~$1$.
For a given~$n \in \bbN$, and an initial configuration~$C = (z,X_0)$, we define the convergence time of protocol~$\prot$ as the first round for which all agents have adopted the correct opinion and remain with it forever, that is:
\begin{equation*}
	\tau_n(\prot,C) := \inf \{t \geq 0, \text{ for every } s \geq t, X_s = n \cdot z \}.
\end{equation*}
Given a sequence of events~$\{A_n\}_{n\in \bbN}$, we say that ``$A_n$ happens with high probability (w.h.p.)'' if $\Pr (A_n) = 1-1/n^{\Omega(1)}$.

\subsection{Our Results} \label{sec:our_results}

We show that, when the sample size is bounded, any protocol that does not have access to memory needs almost-linear time to solve the bit-dissemination problem.
\begin{theorem} \label{thm:main}
    Assume that the sample size~$\ell$ is constant.
    For every protocol~$\prot$, there exists a sequence of initial configuration~$C_n$ such that for every~$\varepsilon > 0$,
    the convergence time of $\prot$ is greater than $n^{1-\varepsilon}$ w.h.p.: 
    \begin{equation*}
    	\Pr \big[ \tau_n(\prot,C_n) < n^{1-\varepsilon} \big] =  \frac{1}{n^{\Omega(1)}}.
    \end{equation*}
\end{theorem}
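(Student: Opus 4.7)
My plan is to analyse the fraction $y_t = X_t/n$ via its mean-field approximation, exploiting the fact that with constant sample size $\ell$ the source can shift $\E[X_{t+1} \mid X_t]$ by only an additive $O(1)$ per round. A direct computation gives
\[
\E\big[X_{t+1}\,\big|\, X_t\big] \;=\; n\,F(X_t/n) \;+\; O(1),
\]
where
\[
F(y) \;=\; y\, G_n^{[1]}(y) + (1-y)\, G_n^{[0]}(y), \qquad G_n^{[b]}(y) \;=\; \E\bigl[g_n^{[b]}(\binomial(\ell, y))\bigr],
\]
is the mean-field map of the source-less dynamics (the chain obtained when all $n$ agents operate $\prot$). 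Because consensus must be absorbing under~$\prot$, one has $g_n^{[0]}(0)=0$ and $g_n^{[1]}(\ell)=1$, so $\phi := F - \mathrm{id}$ is a polynomial of degree at most $\ell+1$ vanishing at both $0$ and~$1$. Given $X_t$, the variable $X_{t+1}$ is a sum of $n$ conditionally independent Bernoullis, whence Chernoff concentration yields per-round fluctuations of order $\sqrt{n\log n}$ around the mean with high probability.

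The proof then splits on whether $\phi$ has an interior zero in $(0,1)$. If $\phi$ admits such a zero $y^\star$ with $|F'(y^\star)| \leq 1$, I initialise at $X_0 = \lfloor n y^\star \rfloor$ and control the quadratic deviation $(y_t - y^\star)^2$ via a supermartingale argument combined with Doob's maximal inequality, yielding $|y_t - y^\star| \leq n^{-\varepsilon/4}$ for all $t \leq n^{1-\varepsilon}$ w.h.p., so neither consensus is reached. If $\phi$ has no interior zero, say $\phi > 0$ on $(0,1)$, iterating $F$ pushes any interior configuration towards~$1$; choosing $z = 0$ adversarially makes the wrong consensus $X_t = n-1$. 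A concentration argument along the mean-field trajectory shows that $X_t$ reaches and stays near $n-1$ within $O(\log n)$ rounds w.h.p. From there, a non-source agent flips back to $0$ only if its sample contains the source, which happens for at most $O(1)$ agents per round; this yields a downward drift of magnitude $O(1)$, and Azuma–Hoeffding then gives that reaching $X_t = 0$ takes $\Omega(n/\log n) \geq n^{1-\varepsilon}$ rounds w.h.p. The remaining cases $\phi < 0$ and $\phi \equiv 0$ (the latter encompassing voter-like dynamics, which behave as a source-perturbed martingale with step variance $\Theta(1/n)$ and gambler's-ruin time $\Omega(n)$) are handled symmetrically.

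The main technical obstacle I anticipate is the sub-case of an \emph{unstable} interior fixed point, i.e.\ $|F'(y^\star)| > 1$: noise is then amplified and $y_t$ leaves the neighborhood of $y^\star$ within $O(\log n)$ rounds, so the supermartingale argument breaks down. I would handle it by initialising at $y^\star + \delta$ for a small constant~$\delta$, choosing the sign of $\delta$ and the value of~$z$ so that the mean-field trajectory $F^t(y^\star + \delta)$ lands in the basin of the \emph{wrong} consensus, thereby reducing to the argument of the second case. Propagating a high-probability guarantee through this reduction, and ruling out a source-driven "rescue" during the long sojourn of $X_t$ near the wrong consensus, is the delicate step and will require a careful martingale analysis of the return time to the absorbing boundary.
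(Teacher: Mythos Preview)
Your setup---the polynomial $\phi$ of bounded degree, the $O(1)$ source perturbation to the drift, per-round concentration---matches the paper. But the case split into stable versus unstable interior zeros is both more intricate than needed and genuinely incomplete. Beyond the unstable case you already flag, the ``stable'' branch also breaks when $y^\star$ is a root of even multiplicity with $F'(y^\star)=1$: then $\phi\ge 0$ on both sides of $y^\star$, noise of order $n^{-1/2}$ nudges the process off $y^\star$, and the quadratic drift $\phi(y)\approx c(y-y^\star)^2$ amplifies the deviation so that escape occurs in roughly $\sqrt n$ rounds, not $n^{1-\varepsilon}$. Your criterion $|F'(y^\star)|\le 1$ wrongly files this under the supermartingale argument. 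Separately, you treat $\phi$ as fixed, but $g_n^{[b]}$ and hence the root pattern of $\phi_n$ may vary arbitrarily with $n$; this must be stabilised first (the paper does it by Bolzano--Weierstrass on the tuple of roots, passing to an infinite subsequence of values of $n$).

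The paper bypasses the whole stable/unstable taxonomy with one observation: since $\phi_n$ has at most $\ell+1$ roots in $[0,1]$, there is always a subinterval adjacent to $1$ of \emph{constant} length (after passing to the subsequence) on which $\phi_n$ has constant sign. If $\phi_n<0$ there, set $z=1$; to reach consensus the process must cross this interval upward, yet its drift there is at most $+1$, and an Azuma-type crossing lemma then gives $\ge n^{1-\varepsilon}$ rounds. If $\phi_n>0$, set $z=0$ and argue symmetrically. This single argument handles your stable, unstable, degenerate, and no-interior-zero cases uniformly, and never requires the process to actually reach the wrong consensus---only to fail to traverse one fixed-width window. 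Your proposed fix for the unstable case (``land in the wrong basin, then reduce to case two'') is essentially this idea, once you realise that arriving near $n-1$ is unnecessary: what matters is only that $\phi$ has the wrong sign on a constant-width interval blocking the path to the correct boundary.
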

To the best of our knowledge, this is the first non-trivial lower-bound for this problem in the parallel setting.
The proof of \Cref{thm:main} is presented in \Cref{sec:main_proof}, and uses a general result on Markov chains, described in \Cref{sec:markov_chain_lemma}, as a black box.
It consists in studying a kind of ``characteristic'' function, defined as
\begin{equation} \label{eq:Fn_def}
    F_n(p) := -p + \sum_{k=0}^\ell \binom{\ell}{k} p^k (1-p)^{\ell - k} \pa{ p \, g_n^{[1]}(k) + (1-p) \, g_n^{[0]}(k) }.
\end{equation}
The sum in \Cref{eq:Fn_def} corresponds to the probability that a non-source agent, taken uniformly at random, adopts opinion 1, given that the current proportion of agents with opinion~$1$ is~$p$.
Informally, $F_n(p)$ measures the ``bias'' of a protocol~$\prot$ towards opinion~$1$, or in other words:
\begin{equation*} 
    \E \pa{ \frac{X_{t+1}}{n} \mid X_t = x_t} \approx \frac{x_t}{n} + F_n\pa{\frac{x_t}{n}}
\end{equation*}
(see \Cref{lem:expectation_bounds} for a more accurate statement).
As a consequence, the sign of the function at~$p$ provides information on the trend of the dynamics when the proportion of agents with opinion 1 is~$p$; moreover, roots correspond to fixed points of the dynamics (either stable or unstable).

Similar functions have already been defined in the literature, where they are typically used to obtain two types of results.
On the one hand, they can be used to identify phase transitions~\cite{cruciani_phase_2021,cruciani_phase_2021a}.
For example, when such characteristic function depends on a parameter~$\alpha$, there might be a critical value $\alpha^\star$ at which a new root appears. In that case, the behavior of the dynamics below and above~$\alpha^\star$ can be significantly different.
On the other hand, if a characteristic function has a constant sign over a large interval (and under some additional conditions), the dynamics cannot easily travel in the opposite direction, which can be exploited to obtain lower bounds on the convergence time~\cite{cruciani_phase_2021}.

In this paper, we use function~$F_n$ yet in another way. Specifically, we leverage the fact that a bound on the sample size~$\ell$ implies a bound on the degree of~$F_n$, and therefore a bound on the number of roots within the interval~$[0,1]$. Then, we consider a well-chosen interval of constant length between two roots of~$F_n$, and employ the aforementioned argument to obtain a lower bound:
if~$F_n$ is negative on this interval, i.e., $\prot$ tends to make the proportion of $1$-opinions decrease, we show that the process will be slow to reach consensus every time the correct opinion is~$1$.
Conversely, if it is positive, we show that fast convergence fails whenever the correct opinion is~$0$.

In terms of the dependency on~$n$, we show that our lower bound is nearly tight (up to a sub-polynomial factor) by adapting a well-known result to our setting. Its proof does not introduce any novel argument and is deferred to \Cref{sec:voter_upper}.
\begin{theorem} \label{thm:secondary}
    Consider the Voter dynamics~$\prot^\textrm{voter}$, with sample size~$\ell = 1$. For every sequence of initial configuration~$C_n$, 
    the convergence time of $\prot^\textrm{voter}$ is less than $2n \log n$ w.h.p.:
    \begin{equation*}
    	\Pr \big[ \tau_n(\prot,C_n) \leq 2n \log n \big] \geq 1 - \frac{1}{n^2} .
    \end{equation*}
\end{theorem}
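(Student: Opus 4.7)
The plan is to prove the bound via the classical dual-process (backward random walk) coupling for the voter dynamics. Since $\ell = 1$, in each round every non-source agent $i$ simply copies the opinion of a uniformly random agent $S_t^{(i)}$, so $X_{t+1}^{(i)} = X_t^{(S_t^{(i)})}$. Iterating this recursion backward in time from an arbitrary target round $T$, I would define, for each non-source agent $i$, a backward random walk $W_0^{(i)} := i$ and $W_s^{(i)} := S_{T-s}^{(W_{s-1}^{(i)})}$ for $s = 1, \ldots, T$. By construction, $X_T^{(i)}$ equals $X_0^{(W_T^{(i)})}$ if the walk avoids the source for all $T$ steps, and otherwise the walk effectively ``absorbs'' at the source, forcing $X_T^{(i)} = z$.

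The key property is that conditionally on $W_{s-1}^{(i)}$, the next step $W_s^{(i)}$ is uniform in $I$ (because the samples $S_t^{(\cdot)}$ are drawn independently and uniformly), and in particular hits the source with probability exactly $1/n$, independently across $s$. Consequently, the probability that the walk started from $i$ fails to meet the source within $T$ steps is at most $(1 - 1/n)^T \leq e^{-T/n}$. Setting $T = 2n \log n$ gives a single-walk failure probability of at most $n^{-2}$, and a union bound over the $n-1$ non-source agents shows that, with probability at least $1 - (n-1)/n^2$, every backward walk hits the source by time $T$; on this event, every non-source agent holds $z$ at time $T$, so $X_T = nz$.

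To finish, I would observe that the consensus configuration $X_t = nz$ is absorbing: once every agent has opinion $z$, every sample in the next round returns $z$, so the state is preserved forever. Hence $\tau_n(\voter, C_n) \leq T = 2n\log n$ on the event described above, uniformly in the initial configuration $C_n$, which yields the claim. The argument is standard for voter-like models and presents no genuine obstacle; the only delicate point is tuning the constants (either a modest enlargement of $T$, or a slightly more careful union bound taking into account that distinct backward walks are positively correlated via coalescence) so that the aggregate failure probability matches the $1/n^2$ stated in the theorem rather than the crude $1/n$ one would get from a naive application.
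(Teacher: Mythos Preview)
Your proposal is correct and follows essentially the same backward-coupling (dual random walk) argument as the paper's proof. Your remark about the constants is apt: the paper's own union bound also only yields $1 - n\,(1-1/n)^{2n\log n} \geq 1 - 1/n$ rather than the stated $1 - 1/n^{2}$, and it does not carry out either of the refinements you suggest.
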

When it comes to parameter~$\ell$, a gap remains between our lower bound and the upper bound in~\cite{becchetti_minority_2024}, where it is shown that the minority dynamics solves the problem in~$O(\log^2 n)$ rounds w.h.p. when~$\ell$ is at least~$\Omega(\sqrt{n \log n})$.

Unfortunately, we believe that our techniques cannot be used to extend the lower bound to a higher value of~$\ell$. 
Indeed, if $\ell = \Omega(\log n)$, it is already possible for a protocol to converge in just one round w.h.p. from configurations that are arbitrarily far away from the consensus. This observation destroys any hope of restricting the analysis to a small interval of the configuration space. In contrast, this phenomenon does not happen w.h.p. in our setting (see \Cref{lemma:no_jump_to_consensus}).
However, we have no good reason to think that $\Theta(\sqrt{n \log n})$ is the smallest value of~$\ell$ allowing for an efficient protocol (such as the minority dynamics) -- this is left as an open problem.

\subsection{Other Related Works}

The bit-dissemination problem was also studied under the assumption that agents can use a moderate amount of memory.
An efficient protocol is identified in \cite{korman_early_2022}, and achieves consensus in $O(\polylog~n)$ parallel rounds with high probability. 
It relies on agents being able to memorize $\log \log n$ bits of information from one round to the next and requires a sample size logarithmic in~$n$.
Other candidates are mentioned in~\cite{boczkowski_minimizing_2017} but are not analysed.
The authors of \cite{dudek_universal_2018} show that the problem can be solved in the context of population protocols with a memory of only constant size. Importantly however, population protocols do not fit the framework of passive communications.
Specifically, interaction rules in this model depend on the exact states of the agents, and not just on their binary opinion.

The bit-dissemination problem is a specific case of the \textit{majority bit-dissemination} problem, introduced in~\cite{boczkowski_minimizing_2017} and also addressed in~\cite{dudek_universal_2018}. In this variant, the number of source agents is arbitrarily large, and they may have conflicting \textit{preferences}.
The opinions of sources must not necessarily be in line with their preferences, and they can participate to the protocol in the same way as regular agents.
The correct opinion is defined as the most widespread preference among sources.
On the one hand, an efficient solution can be derived from the results in~\cite{dudek_universal_2018,bastide_selfstabilizing_2021}, but require active communications, and relies on memory.
On the other hand, the authors of~\cite{korman_early_2022} show that the majority bit-dissemination problem is impossible with passive communications.

More generally, many works within the opinion dynamics literature investigate the influence of the presence of ``stubborn'' or ``biased'' agents on the behaviour of the system.
Typically, these works focus on a single arbitrary process, mainly the Voter dynamics~\cite{sood_voter_2005,yildiz_binary_2013a,fudolig_analytic_2014,mukhopadhyay_binary_2016,moeinifar_zealots_2021}, and to the best of our knowledge, do not establish general lower bounds.
In contrast, our goal is to better understand the difficulty of spreading information as an algorithmic problem; therefore, we do not want to rule out any imaginable protocol within the constraints of our setting.
Furthermore, they often investigate different questions, such as the impact of the number of sources or they position in the network on the convergence time, or assume that sources may have conflicting opinions.

\section{Preliminaries} \label{sec:preliminaries}

In this section, we make a few general observations that we will use later in our analysis. We assume that~$\ell$ is a constant w.r.t.~$n$.

Conditioning on Agent~$i$ sampling exactly~$k$ times the opinion ``1'', for every~$k \in \{0,\ldots,\ell\}$, we obtain that
\begin{equation} \label{eq:general_proba}
    \Pr \pa{X_{t+1}^{(i)} = 1 \mid X_t = x_t,~ X_t^{(i)} = b} = \sum_{k=0}^\ell \binom{\ell}{k} \pa{\frac{x_t}{n}}^k \pa{1-\frac{x_t}{n}}^{\ell - k} g_n^{[b]}(k).
\end{equation}

After convergence has happened, i.e., $X_t \in \{0,n\}$, this probability must be equal to~$0$ or~$1$ respectively so that a consensus is maintained. This imposes a constraint on any protocol attempting to solve the bit-dissemination problem, which can be formalized as follows.

\begin{proposition} \label{prop:necessary}
	Any protocol~$\prot$ solving the bit-dissemination problem must satisfy $g_n^{[0]}(0) = 0$ and $g_n^{[1]}(\ell) = 1$.
\end{proposition}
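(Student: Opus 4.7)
The plan is to derive each of the two conditions separately by considering an initial configuration that is already in the target consensus state, and arguing that the consensus cannot persist in subsequent rounds unless the corresponding condition holds.

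For the first condition $g_n^{[0]}(0) = 0$, I will fix the correct opinion to $z = 0$ and start from the configuration $C = (0,0)$ in which all agents (including the source) already hold opinion $0$. Any protocol that solves the bit-dissemination problem from $C$ must ensure that $\tau_n(\prot,C)$ is finite almost surely, i.e. that from some round onwards the system remains stuck at $X = 0$. However, plugging $X_t = 0$ and $b = 0$ into Eq.~\eqref{eq:general_proba}, the only surviving term of the sum is the one for $k = 0$, so each non-source agent adopts opinion $1$ independently with probability exactly $g_n^{[0]}(0)$. If this probability were strictly positive, then starting from any round $t$ with $X_t = 0$ the probability of remaining at $0$ for the next $T$ rounds is $(1 - g_n^{[0]}(0))^{(n-1)T}$, which tends to $0$ as $T \to \infty$; a countable union bound over $t$ will then show that the event ``$X_s = 0$ for all sufficiently large $s$'' has probability $0$, contradicting the solvability of the problem.

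The second condition $g_n^{[1]}(\ell) = 1$ follows by the symmetric argument: taking $z = 1$ and initial configuration $C = (1,n)$, the state $X_t = n$ forces every sample to consist of opinion $1$, so by Eq.~\eqref{eq:general_proba} a non-source agent with $b = 1$ keeps opinion $1$ with probability exactly $g_n^{[1]}(\ell)$, i.e. switches to opinion $0$ with probability $1 - g_n^{[1]}(\ell)$. If this quantity were positive, the same geometric/union-bound reasoning would rule out convergence.

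There is no genuine obstacle here: the proposition is essentially an immediate consequence of the definitions. The only minor subtlety worth making explicit in the write-up is the interpretation of ``solving the bit-dissemination problem'' --- namely, that $\tau_n$ must be finite with probability $1$ (or, weaker still, with positive probability) from every initial configuration, including the two trivial configurations already at consensus. Once that reading is fixed, the two arguments above are essentially one-liners.
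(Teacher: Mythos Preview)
Your proposal is correct and follows essentially the same approach as the paper: start from the all-$0$ (resp.\ all-$1$) consensus with the matching correct opinion, use \Cref{eq:general_proba} to see that each non-source agent flips independently with probability $g_n^{[0]}(0)$ (resp.\ $1-g_n^{[1]}(\ell)$), and conclude that the consensus cannot be maintained forever if this probability is positive. The paper's write-up is slightly terser (it stops at $\Pr(X_{t+1}=0\mid X_t=0)<1$ and asserts $\tau_n=\infty$ a.s.\ directly), whereas you spell out the geometric decay and the countable union bound, but the argument is the same.
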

\begin{proof}
	Consider the case that the correct opinion $z=0$.
	If $X_t=0$ for some round~$t$, then each agent has opinion~$0$, and receives exactly~$0$ samples equal to~'$1$'. Following the protocol, all agents adopt opinion $1$ in the next round w.p. $g_n^{[0]}(0)$ independently of each-other. If $g_n^{[0]}(0)>0$, then
	\begin{equation*}
		\Pr \pa{ X_{t+1} = 0 \mid X_t = 0} = \prod_{i\geq 2} \Pr \pa{ X_{t+1}^{(i)} = 0 \mid X_t = 0} = \pa{1 - g_n^{[0]}(0)}^{n-1} < 1.
	\end{equation*}
	Therefore, $\inf \{t \geq 0, \text{ for every } s \geq t, X_s = 0 \} = +\infty$ almost surely, and the protocol cannot solve the bit-dissemination problem in the sense of \Cref{sec:problem_definition}.
	By symmetry, we obtain the other statement about $g_n^{[1]}(\ell)$, which concludes the proof of \Cref{prop:necessary}.
\end{proof}

Accordingly, we will always assume that~$g_n^{[0]}(0) = 0$ and $g_n^{[1]}(\ell) = 1$.
Using this assumption, we can show a general upper bound on the fraction of agents with opinion~$0$ that can change opinion in a single time step.

\begin{proposition} \label{lemma:no_jump_to_consensus}
Let $c \in (0,1)$ and consider a protocol~$\prot$ solving the bit-dissemination problem. There is a constant $y = y(c,\ell) \in(c,1)$ s.t. for every~$n$ large enough, and~$x_t \leq c \, n$, 
\begin{equation*}
    \Pr \pa{ X_{t+1} \leq y \, n \mid X_t=x_t } \geq 1 - \exp\pa{ - 2 \, n^{-1/2}}.
\end{equation*}
\end{proposition}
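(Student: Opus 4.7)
The idea is that $X_{t+1}$, conditionally on $X_t$, is (up to the source's deterministic contribution) a sum of independent Bernoulli random variables, so standard concentration should yield the claim as soon as I can push its conditional expectation at least a constant fraction below $n$. My plan is to extract that expectation bound from \Cref{prop:necessary}.

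I would first fix a non-source agent $i$ and bound $p_i := \Pr(X_{t+1}^{(i)} = 1 \mid X_t = x_t,~ X_t^{(i)})$ via \Cref{eq:general_proba}. If $X_t^{(i)} = 0$, the necessary condition $g_n^{[0]}(0) = 0$ provided by \Cref{prop:necessary} kills the $k=0$ term in \Cref{eq:general_proba}; bounding every remaining $g_n^{[0]}(k)$ by $1$ then gives $p_i \leq 1 - (1 - x_t/n)^\ell \leq 1 - (1-c)^\ell$. If $X_t^{(i)} = 1$, the trivial bound $p_i \leq 1$ suffices.

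Summing over all non-source agents, of which at most $x_t \leq cn$ hold opinion $1$, and adding the source's contribution (at most $1$), I would obtain
\[
    \E\pa{X_{t+1} \mid X_t = x_t} \;\leq\; x_t + (n - x_t)\bigl(1 - (1-c)^\ell\bigr) \;\leq\; n\bigl(1 - (1-c)^{\ell+1}\bigr).
\]
Setting $y := 1 - (1-c)^{\ell+1}/2$ (which depends only on $c$ and $\ell$ and lies in $(c,1)$), the gap $yn - \E(X_{t+1} \mid X_t = x_t)$ is of order $n$. Since the opinions $X_{t+1}^{(i)}$ of the $n-1$ non-source agents are mutually independent conditionally on $X_t$ (their samples are drawn independently), an application of Hoeffding's inequality then yields
\[
    \Pr\pa{X_{t+1} > yn \mid X_t = x_t} \;\leq\; \exp\bigl(-\Omega(n)\bigr),
\]
which for $n$ large enough is far stronger than the stated bound.

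The only real subtlety I anticipate is spotting that \Cref{prop:necessary} is the right lever: without the constraint $g_n^{[0]}(0) = 0$, a pathological protocol could have every opinion-$0$ agent flip to $1$ in a single step (e.g.\ using $g_n^{[0]} \equiv 1$), making the claim false. Once this observation is in hand, the remainder is a routine concentration argument with no conceptual difficulty.
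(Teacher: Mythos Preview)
Your proposal is correct and follows essentially the same approach as the paper: both arguments hinge on the observation that, thanks to $g_n^{[0]}(0)=0$ from \Cref{prop:necessary}, each opinion-$0$ agent stays at $0$ with probability at least $(1-c)^\ell$, which forces $\E(X_{t+1}\mid X_t=x_t)\le n\bigl(1-(1-c)^{\ell+1}\bigr)$, after which concentration finishes the job with the same choice $y=1-\tfrac12(1-c)^{\ell+1}$. The only cosmetic difference is that the paper phrases the count complementarily (tracking the number of $0$-agents that stay at $0$) and uses a binomial domination to reduce to i.i.d.\ variables before invoking the stated Hoeffding bound, whereas you apply Hoeffding directly to the sum of independent (not identically distributed) Bernoullis; your route actually yields the sharper tail $\exp(-\Omega(n))$ rather than the $\exp(-2n^{1/2})$ recorded in the paper.
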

\begin{proof}
    Let~$t \in \bbN$, and~$x_t \leq cn$. By \Cref{eq:general_proba}, and since we assumed $g_n^{[0]}(0) = 0$,
    \begin{align*}
        \Pr \pa{X_{t+1}^{(i)} = 0 \mid X_t = x_t, X_t^{(i)} = 0} &= 1 - \sum_{k=0}^\ell \binom{\ell}{k} \pa{\frac{x_t}{n}}^k \pa{1-\frac{x_t}{n}}^{\ell - k} g_n^{[0]}(k) \\
        &= \sum_{k=0}^\ell \binom{\ell}{k} \pa{\frac{x_t}{n}}^k \pa{1-\frac{x_t}{n}}^{\ell - k} \pa{ 1-g_n^{[0]}(k) } \\
        &\geq \pa{ 1- \frac{x_t}{n} }^\ell \geq (1-c)^\ell.
    \end{align*}
    Let $Y$ be the number of agents with opinion~$0$ in round $t$, that keep opinion~$0$ in round~$t+1$ (conditioning on $X_t = x_t$).
    By assumption, $n-X_t \geq (1-c)n$, so the last equation implies the following domination\footnote{Given two real-valued random variables $X$ and $Y$, we say that~$X$ is \textit{stochastically dominated} by $Y$, and write $X \preceq Y$, if for every~$x \in \bbR$, $\Pr(X > x) \leq \Pr(Y > x)$.}:
    \begin{equation*}
        Y \succeq \binomial \pa{ (1-c)n , (1-c)^\ell } := Z.
    \end{equation*}
    Let~$a = a(c,\ell) := (1-c)^{\ell+1}$, so that $\E\pa{Z} = an$.
    Let $a' := a - n^{-1/4}$. For~$n$ large enough, we have $a' > a/2$.
    By \nameref{thm:additive_chernoff_bound}, we have
    \begin{align*}
        \Pr \pa{Z \leq \frac{an}{2}} \leq \Pr \pa{Z \leq a'n} = \Pr \pa{Z \leq \E(Z) -n^{3/4}} \leq \exp\pa{ -2n^{1/2} }.
    \end{align*}
    Finally, setting~$y = y(c,\ell) := \max( 1-a/2,~ c)$ (so that~$y \in (c,1)$), we obtain
    \begin{equation*}
        \Pr \pa{ X_{t+1} \geq y \, n \mid X_t=x_t } \leq \Pr \pa{ Y \leq \frac{an}{2} } \leq \Pr \pa{ Z \leq \frac{an}{2} } \leq \exp\pa{ -2 \, n^{1/2} },
    \end{equation*}
    which concludes the proof of \Cref{lemma:no_jump_to_consensus}.
\end{proof}

Finally, the following proposition justifies the informal claim made in \Cref{sec:our_results} according to which the function $F_n$, defined in \Cref{eq:Fn_def}, represents the ``bias'' of the corresponding protocol towards opinion~$1$.

\begin{proposition} \label{lem:expectation_bounds}
    For every protocol~$\prot$, and every $x_t \in [n]$,
    \begin{align} 
        \E \pa{ X_{t+1} \mid X_t = x_t} &\leq x_t + n \, F_n\pa{\frac{x_t}{n}} +1, \label{eq:expectation_bound_z1n} \\
        \E \pa{ X_{t+1} \mid X_t = x_t} &\geq x_t + n \, F_n\pa{\frac{x_t}{n}} -1. \label{eq:expectation_bound_z0n}
    \end{align}
\end{proposition}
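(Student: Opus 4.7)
The plan is to unpack both sides of the claimed inequality by conditioning on each agent's current opinion and then tracking the small discrepancy that arises because the source (Agent~$1$) is a deterministic contributor rather than a stochastic one.

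Write $p := x_t/n$ and, for $b \in \{0,1\}$,
\begin{equation*}
    h_n^{[b]}(p) \;:=\; \sum_{k=0}^{\ell} \binom{\ell}{k} p^k (1-p)^{\ell-k}\, g_n^{[b]}(k),
\end{equation*}
so that by \Cref{eq:general_proba}, $h_n^{[b]}(p) = \Pr(X_{t+1}^{(i)} = 1 \mid X_t = x_t, X_t^{(i)} = b)$ for every agent~$i$. Decomposing
\begin{equation*}
    X_{t+1} \;=\; X_{t+1}^{(1)} + \sum_{i=2}^n X_{t+1}^{(i)} \;=\; z + \sum_{i=2}^n X_{t+1}^{(i)},
\end{equation*}
where $z \in \{0,1\}$ denotes the correct opinion, I split the sum according to each non-source agent's opinion in round~$t$. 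By exchangeability, exactly $x_t - z$ of the non-source agents hold opinion~$1$ (and $n - 1 - (x_t - z)$ hold opinion~$0$), so linearity of expectation yields
\begin{equation*}
    \E(X_{t+1} \mid X_t = x_t) \;=\; z + (x_t - z)\, h_n^{[1]}(p) + (n - 1 - x_t + z)\, h_n^{[0]}(p).
\end{equation*}

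On the other hand, a direct rewriting of the definition \Cref{eq:Fn_def} gives
\begin{equation*}
    x_t + n\, F_n(p) \;=\; x_t\, h_n^{[1]}(p) + (n - x_t)\, h_n^{[0]}(p),
\end{equation*}
because the sum over $k$ inside $F_n(p)$ is precisely $p\, h_n^{[1]}(p) + (1-p)\, h_n^{[0]}(p)$. Subtracting the two expressions, the difference simplifies to
\begin{equation*}
    \E(X_{t+1} \mid X_t = x_t) - \bigl(x_t + n\, F_n(p)\bigr) \;=\; z\bigl(1 - h_n^{[1]}(p)\bigr) - (1-z)\, h_n^{[0]}(p),
\end{equation*}
which lies in $[-1, 1]$ since $h_n^{[b]}(p) \in [0,1]$ and $z \in \{0,1\}$. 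This immediately yields both \Cref{eq:expectation_bound_z1n} and \Cref{eq:expectation_bound_z0n}.

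There is no real obstacle here: the statement is essentially a book-keeping exercise. The only subtle point is to remember that $F_n$ was defined as if every agent (including the source) updated according to the protocol, whereas the source does not; this mismatch accounts for exactly one agent's worth of expectation, hence the additive error of~$\pm 1$. Keeping careful track of whether there are $x_t$ or $x_t - z$ agents with opinion~$1$ among the non-source agents is the only place where one has to be attentive.
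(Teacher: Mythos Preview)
Your proof is correct and follows essentially the same approach as the paper's own proof: both compute $\E(X_{t+1}\mid X_t=x_t)$ by splitting over the source and the non-source agents of each opinion, rewrite $x_t + nF_n(p)$ as $x_t\,h_n^{[1]}(p)+(n-x_t)\,h_n^{[0]}(p)$, and identify the difference as $z(1-h_n^{[1]}(p))-(1-z)h_n^{[0]}(p)\in[-1,1]$. The only difference is notational (the paper writes $P_b$ for your $h_n^{[b]}(p)$).
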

\begin{proof}
    For a non-source agent~$i \in I\setminus\{1\}$, an opinion~$b \in \{0,1\}$, and any $p \in [0,1]$, let
    \begin{equation*}
        P_b := \Pr \pa{X_{t+1}^{(i)} = 1 \mid X_t = np,~ X_t^{(i)} = b} = \sum_{k=0}^\ell \binom{\ell}{k} p^k (1-p)^{\ell - k} g_n^{[b]}(k),
    \end{equation*}
    where the second equality is a restatement of \Cref{eq:general_proba}.
    Note that by definition of~$F_n$,
    \begin{equation} \label{eq:F_restatement}
        F_n(p) = p \, P_1 + (1-p)P_0 - p.
    \end{equation}
    Denoting by~$z$ the correct opinion, one can check that there are $X_t-z$ non-source agents with opinion 1 in round $t$, and $n-X_t-(1-z)$ non-source agents with opinion 0. Hence,
    \begin{align*}
        \E \pa{ X_{t+1} \mid X_t = np } &= z + (np-z) P_1 + \big(n - np - (1-z) \big)P_0 \\
        &= n \big(pP_1 + (1-p)P_0 \big) + z(1-P_1) - (1-z)P_0 \\
        &= np + n \, F_n(p) + z(1-P_1) - (1-z)P_0. & \text{(by \Cref{eq:F_restatement})}
    \end{align*}
    Note that for any source opinion $z \in \{0,1\}$, since $P_0,P_1 \in [0,1]$, we have
    \begin{equation*}
        -1 \leq z(1-P_1) - (1-z)P_0 \leq +1,
    \end{equation*}
    from which \Cref{eq:expectation_bound_z1n,eq:expectation_bound_z0n} follow by taking $p := x_t/n$.
\end{proof}

\section{An Intermediate Result on Markov Chains} \label{sec:markov_chain_lemma}

In this section, we present a result that we will use later as a black box (with~$a_1,a_2,a_3 \in [0,1]$).
Informally, the theorem says that if a Markov chain is a super-martingale over an interval of values, and given that it cannot skip the interval entirely, then the time required to cross the interval is at least the time needed by a martingale to escape it.

\begin{theorem} \label{lem:main}
    Let $\{X_t\}_{t \in \bbN}$ be a Markov chain on $\mathbb{Z}$ and $\varepsilon > 0$.
    If there are $a_1<a_2<a_3 \in \bbR$ s.t.
\begin{itemize}
    \item[(i)] for every $x_t \in \{ \lceil a_1 \, n \rceil,..., \lfloor a_3 \, n \rfloor \}$, $\E(X_{t+1} \mid X_t = x_t) \leq x_t+1$,
    \item[(ii)] for every $x_t < a_1 \, n$,
    $\Pr (X_{t+1} > a_2 \, n \mid X_t = x_t) = \exp\pa{-n^{\Omega(1)}}$,
    \item[(iii)] $\Pr(|X_{t+1}-\E\pa{X_{t+1} \mid X_t}| > n^{1/2 + \varepsilon/4}) < 2\exp \pa{ -2n^{\varepsilon/2}}$,
\end{itemize}
then for $X_0=\frac{a_2+a_3}{2} \cdot n$ and $n$ large enough, we have w.h.p.
\begin{equation*}
    \inf \{t \in \bbN, X_t \geq a_3 \, n\} \geq n^{1-\varepsilon}.
\end{equation*}
\end{theorem}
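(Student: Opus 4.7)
The plan is to combine a supermartingale estimate coming from condition (i) with Azuma--Hoeffding concentration for the centered increments (iii), while using (ii) to forbid the only way the chain could ``cheat'': escaping the strip $[\lceil a_1 n \rceil, \lfloor a_3 n \rfloor]$ downwards and then jumping back above $a_2 n$ in a single step. Starting from $X_0 = (a_2+a_3)n/2$, the distance to $a_3 n$ is $(a_3-a_2)n/2 = \Theta(n)$; within $T := \lceil n^{1-\varepsilon} \rceil$ steps the accumulated drift is at most $T = n^{1-\varepsilon}$ and the typical Azuma fluctuation is of order $n^{1/2+\varepsilon/4}\sqrt{T} = n^{1-\varepsilon/4}$, both $o(n)$.

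First, I introduce the ``clean'' events
\begin{align*}
    A &:= \bigcap_{0 \leq t < T} \Big\{\, \big|X_{t+1} - \E\pa{X_{t+1} \mid X_t}\big| \leq n^{1/2 + \varepsilon/4} \,\Big\}, \\
    B &:= \bigcap_{0 \leq t < T} \Big\{\, X_t < \lceil a_1 n \rceil \,\Rightarrow\, X_{t+1} \leq \lfloor a_2 n \rfloor \,\Big\},
\end{align*}
which, by (ii), (iii) and a union bound over $T = n^{1-\varepsilon}$ rounds, satisfy $\Pr(A^c \cup B^c) \leq \exp(-n^{\Omega(1)})$. Next, I consider the martingale $Y_t := \sum_{s<t} \pa{ X_{s+1} - X_s - \E\pa{X_{s+1} - X_s \mid X_s} }$ stopped at the first step whose centered increment would exceed $n^{1/2+\varepsilon/4}$; the stopped process has almost-surely bounded increments and agrees with $Y$ on $A$. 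The maximal Azuma--Hoeffding inequality, with $\sum_{s < T}(n^{1/2+\varepsilon/4})^2 = n^{2-\varepsilon/2}$ and threshold $\lambda := n^{1-\varepsilon/5}$, yields
\begin{equation*}
    \Pr\pa{ \sup_{0 \leq t \leq T} |Y_t| > \lambda,\ A } \leq 2 \exp\pa{ -\lambda^2 / \pa{2 n^{2-\varepsilon/2}} } = 2 \exp\pa{-n^{\Omega(\varepsilon)}}.
\end{equation*}

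The remainder is deterministic on the intersection $G$ of these three clean events. Suppose, for contradiction, $\tau := \inf\{t : X_t \geq a_3 n\} < T$. Let $u := 1 + \max\{s < \tau : X_s < \lceil a_1 n \rceil\}$, with the convention $u := 0$ if this set is empty. The case $u = \tau$ is excluded on $B$ (since then $X_{\tau - 1} < \lceil a_1 n \rceil$ would force $X_\tau \leq \lfloor a_2 n \rfloor < a_3 n$), so assume $u < \tau$. For every $s \in [u, \tau - 1]$, $X_s$ lies in the strip and (i) gives $\E\pa{X_{s+1} - X_s \mid X_s} \leq 1$. Moreover, either $u = 0$ and $X_u = (a_2+a_3)n/2$, or $u > 0$ and $B$ yields $X_u \leq \lfloor a_2 n \rfloor \leq (a_2+a_3)n/2$. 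Telescoping then gives
\begin{equation*}
    X_\tau = X_u + \sum_{s=u}^{\tau - 1} \E\pa{X_{s+1} - X_s \mid X_s} + (Y_\tau - Y_u) \leq \frac{a_2+a_3}{2} n + T + 2\lambda,
\end{equation*}
which is $(a_2+a_3)n/2 + o(n) < a_3 n$ for $n$ large---contradicting $X_\tau \geq a_3 n$ and completing the proof.

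The main obstacle, in my view, is reconciling Azuma--Hoeffding with the high-probability (rather than almost-sure) increment bound in (iii). The remedy is standard: stopping $Y_t$ at the first step that would violate the bound produces a martingale with almost-surely bounded increments that coincides with $Y_t$ on~$A$, so Azuma applies to the stopped process and transfers to $Y_t$ on $A$. Beyond that, the proof is careful bookkeeping about which clean event justifies which inequality.
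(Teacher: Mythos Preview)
Your proof is correct and follows essentially the same approach as the paper: both isolate the martingale of centered increments, control it via (a stopped-version of) Azuma--Hoeffding using (iii), and use (ii) to rule out a jump from below $a_1 n$ back above $a_2 n$. The only cosmetic difference is packaging---the paper subtracts the drift upfront ($Y_t=X_t-t$), performs an explicit Doob decomposition, and proves $Y_t\le M_t$ by induction, whereas you keep the drift in the telescoping sum and restart from the last time $u$ the chain was below $\lceil a_1 n\rceil$; these are interchangeable formulations of the same argument.
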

\begin{proof} 
Let~$T = n^{1-\epsilon}$.
Let $Y_t:=X_t-t$. We will consider the Doob decomposition of $Y_t$: for every $t\geq 1$, let
\begin{align*}
    A_0 := 0 &\quad \text{and for all } t>0, \quad A_t :=\sum_{k=1}^t \big[ \E\pa{Y_k \mid Y_{k-1}} -Y_{k-1} \big], \\
    M_0 := Y_0 &\quad \text{and for all } t>0, \quad M_t := Y_0+\sum_{k=1}^t \big[ Y_k - \E\pa{ Y_k \mid Y_{k-1} } \big].
\end{align*}
With this definition, one can check that $Y_t = M_t + A_t$, and that $\{M_t\}_{t \in \bbN}$ is a martingale.
The main ideas of the proof are depicted on \Cref{fig:mc_proof_sketch}.
\begin{figure} [htbp] 
	\centering
	\includegraphics[width=0.75\linewidth]{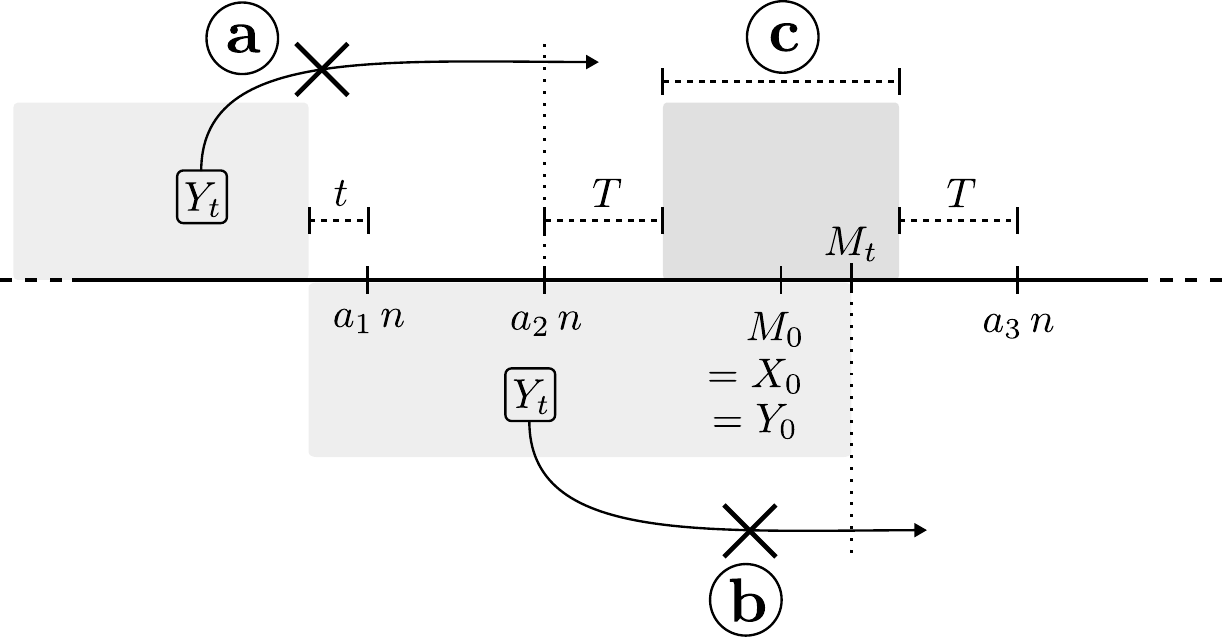}
	\caption{Sketch of the proof of \Cref{lem:main}.
	\textbf{(a)} By assumption (ii), with high probability, $Y_t$ cannot jump from below $a_1 \,n - t$ to above $a_2 \, n -t$ in a single step, let alone~$a_2 \, n$.
	\textbf{(b)} In \Cref{claim:one_step_domination}, we use assumption (i) and the properties of the Doob's decomposition to show that, if $a_1 \, n - t \leq Y_t \leq M_t$, then $Y_t$ cannot jump above $M_t$ in a single step.
	\textbf{(c)} In \Cref{claim:trapping_martingale}, we use the Azuma-Hoeffding inequality to show that $M_t$ remains in the interval $[a_2\,n+T,a_3\,n-T]$ for at least $T$ rounds w.h.p.
	Overall, (a) (b) and (c) implies that $Y_t$ must remain below $a_3 \,  n - T$ for at least $T$ rounds w.h.p., yielding the desired conclusion.
	}
    \label{fig:mc_proof_sketch}
\end{figure}

First, we show that by construction and Assumption (i), $Y_t$ can never ``jump over'' $M_t$ in one round, as long as it starts from the interval~$\{a_1 n-t, \ldots, a_3 n-t\}$.
\begin{claim} \label{claim:one_step_domination}
    For every~$t \in \bbN$,
    \begin{equation*} \label{domination:Y}
        M_t \geq Y_t \text{ and } Y_t \in \{a_1 n-t, \ldots, a_3 n-t\}  \implies M_{t+1} \geq Y_{t+1}.
    \end{equation*}
\end{claim}
\begin{proof}
    Let $y_t \in \{a_1 n-t, \ldots, a_3 n-t\}$ and $m_t \geq y_t$, and consider the event
    \begin{equation*}
        \event := \left\{ Y_t = y_t \medcap M_t = m_t \right\}. 
    \end{equation*}
    We have
    \begin{align*}
		\E \pa{Y_{t+1} \mid Y_t = y_t} &= \E \pa{X_{t+1} - (t+1) \mid X_t = y_t + t} \\
		&= \E \pa{X_{t+1} \mid X_t = y_t + t} - (t+1) \\
		&\leq (y_t + t + 1) - (t+1) & \text{(by (i))}\\
		&= y_t.
	\end{align*}
    Therefore,
    \begin{equation*}
        \pa{A_{t+1} - A_t \mid \event} = \bigg( \E \pa{Y_{t+1} \mid Y_t} - Y_t \mid \event \bigg) = \E \pa{Y_{t+1} \mid Y_t = y_t} - y_t \leq 0.
    \end{equation*}
    Since $m_t \geq y_t$, this implies
    \begin{equation*}
        \pa{A_{t+1} \mid \event} \leq \pa{A_{t} \mid \event} = \pa{Y_t - M_t \mid \event} = y_t - m_t \leq 0,
    \end{equation*}
    and thus,
    \begin{equation*}
        \pa{Y_{t+1} \mid \event} = \pa{M_{t+1} + A_{t+1} \mid \event} \leq \pa{M_{t+1} \mid \event},
    \end{equation*}
    which concludes the proof of \Cref{claim:one_step_domination}.
\end{proof}

Now, using Azuma's inequality and Assumptions (ii) and (iii), we establish high probability bounds on the martingale~$M_t$.
\begin{claim} \label{claim:trapping_martingale}
    With high probability, for every~$t \leq T$, $a_2 \, n + T < M_t < a_3 \, n - T$.
\end{claim}
\begin{proof}
    By construction, and since~$X_t$ and $Y_t$ only differ by a deterministic quantity,
    \begin{equation*}
        M_{t+1} - M_t = Y_{t+1}- \E \pa{Y_{t+1} \mid Y_t} = X_{t+1}- \E\pa{X_{t+1} \mid X_t}.
    \end{equation*}
    By assumption (iii) in the statement, this implies
    \begin{equation*}
        \Pr\pa{ |M_{t+1} - M_t| > n^{\frac{1}{2} + \frac{\varepsilon}{4}} }\leq 2\exp\pa{ -2 \, n^{\frac{\varepsilon}{2}} }.
    \end{equation*}
    By the union bound,
	\begin{equation*} \label{eq:hypothesis_for_Azuma_inequality}
		\Pr \pa{ \exists s \leq t, |M_{s+1} - M_s| > n^{\frac{1}{2} + \frac{\varepsilon}{4}} } \leq 2 \, t \exp \pa{ - 2 \, n^{\frac{\varepsilon}{2}} }.
	\end{equation*}
    Let~$\alpha = (a_3 - a_2)/4$, so that
    \begin{equation} \label{eq:def_alpha}
        M_0 + \alpha \, n = X_0 + \alpha \, n = \frac{a_2+a_3}{2} \, n + \alpha \, n = a_3 \, n - \alpha \, n, \quad \text{ and } M_0 - \alpha \, n = a_2 \, n + \alpha \, n.
    \end{equation}
    By the \nameref{thm:generalized_azuma_hoeffding} applied to $\{M_t\}_{t \in \bbN}$, we then have for every~$t \leq T$,
	\begin{align*}
		\Pr \pa{ |M_t - M_0| > \alpha n} &\leq 2\exp \pa{-\frac{\alpha^2 n^2}{2 \, t \, n^{1+\frac{\varepsilon}{2} }}} + 2 \, t \exp \pa{ - 2 \, n^{\frac{\varepsilon}{2}} } & \\
        &\leq 2 \exp \pa{-\frac{\alpha^2 n^2}{2 \, T \, n^{1+\frac{\varepsilon}{2} }}} + 2 \, T \exp \pa{ - 2 \, n^{\frac{\varepsilon}{2}} } & \text{(since $t \leq T$)} \\
		&= 2 \exp \pa{-\frac{\alpha^2}{2} \cdot n^{\frac{\varepsilon}{2}}} + 2 \, T \exp \pa{ - 2 \, n^{\frac{\varepsilon}{2}} }. & \text{(since $T = n^{1-\varepsilon}$)}
	\end{align*}
    By the union bound,
    \begin{equation} \label{eq:final_bound}
        \Pr \pa{ \exists t \leq T, |M_t - M_0| > \alpha n } \leq 2 T \exp \pa{-\frac{\alpha^2}{2} \cdot n^{\frac{\varepsilon}{2}}} + 2 \, T^2 \exp \pa{ - 2 \, n^{\frac{\varepsilon}{2}} } = o(n^{-2}).
    \end{equation}
    Finally, note that for $n$ large enough, $T = n^{1-\epsilon} < \alpha \, n$, and hence,
    \begin{align*}
        &\Pr\pa{ \exists t \leq T, M_t \notin \{ a_2 \, n + T, \ldots, a_3 \, n - T \} } \\
        &\leq \Pr\pa{ \exists t \leq T, M_t \notin \{ a_2 \, n + \alpha \, n, \ldots, a_3 \, n - \alpha \, n \} } \\
        &= \Pr\pa{ \exists t \leq T, |M_t - M_0| > \alpha \, n } & \text{(by \Cref{eq:def_alpha})} \\
        &= o(n^{-2}), & \text{(by \Cref{eq:final_bound})}
    \end{align*}
    which concludes the proof of \Cref{claim:trapping_martingale}.
\end{proof}

Next, we use \Cref{claim:one_step_domination} and \Cref{claim:trapping_martingale} and Assumption (ii) of the theorem to show that $Y_t$ can never jump over~$M_t$, with high probability.
\begin{claim} \label{claim:domination}
    With high probability, for every $t \leq T$, $M_t \geq Y_t$.
\end{claim}
\begin{proof}
    We will be conditioning on the two following events:
    \begin{align*}
        \event_1 &:= \left\{ \forall t \leq T, \quad Y_t \leq a_1\,n - t \implies Y_{t+1} \leq a_2\,n - t \right\}, \\
        \event_2 &:= \left\{ \forall t \leq T, \quad a_2 \, n + T < M_t< a_3\,n - T \right\}.
    \end{align*}
    Note that $\event_2$ happen w.h.p. as a consequence of \Cref{claim:trapping_martingale}. Moreover,
    \begin{align*}
        \Pr(\event_1) 
        &= 1 - \Pr\pa{ \bigcup\limits_{t=0}^{T} \{Y_t \leq a_1\,n - t \} \medcap \{ Y_{t+1} > a_2\,n - t \} }\\
        &= 1 - \Pr\pa{ \bigcup\limits_{t=0}^{T} \{X_t \leq a_1\,n \} \medcap \{ X_{t+1} > a_2\,n \} }\\
        &\geq 1 - \sum_{t=0}^T \Pr \pa{ X_t \leq a_1\,n  ~\medcap~ X_{t+1} > a_2\,n } & \text{(by the union bound)} \\
        &\geq 1 - \sum_{t=0}^T \Pr \pa{ X_{t+1} > a_2\,n \mid X_t \leq a_1\,n } & \\
        &\geq 1-T\exp\pa{-n^{\Omega(1)}} & \text{(by assumption (ii))}\\
        &\geq 1-o(n^{-2}). &\text{(since $T = n^{1-\varepsilon}$)}
    \end{align*}
    Hence, $\event_1$ also happens w.h.p., and so does $\event_1 \medcap \event_2$.
    To conclude the proof, we will show that
    \begin{equation} \label{eq:induction_target}
        \event_1 \medcap \event_2 \implies \forall t \leq T, M_t \geq Y_t.
    \end{equation}
    We will proceed by induction on~$t$. By definition, we have $M_0 = Y_0$.
    Now, let~$t < T$, and consider the case that $\event_1$ and $\event_2$ hold, and that $M_t \geq Y_t$.
    \begin{itemize}
        \item If $Y_t < a_1 \, n - t$, we have
        \begin{equation*}
            Y_{t+1} \underset{(\event_1)}{\leq} a_2 \, n - t \leq a_2 \, n + T \underset{(\event_2)}{<} M_{t+1}.
        \end{equation*}

        \item Otherwise, by induction hypothesis and $\event_2$, we have $a_1 \, n - t \leq Y_t \leq M_t < a_3 n - T < a_3 n - t$. In this case, $M_{t+1} \geq Y_{t+1}$ follows as a consequence of \Cref{claim:one_step_domination}.
    \end{itemize}
    By induction, we deduce that \Cref{eq:induction_target} holds, which concludes the proof of \Cref{claim:domination}.    
\end{proof}

Finally, we are ready to conclude.
By \Cref{claim:domination}, with high probability, for every~$t\leq T$, $X_t \leq M_t + t$.
Therefore,
\begin{equation*}
    \inf \{t \in \bbN, X_t > a_3 n\} \geq \inf \{t \in \bbN, M_t > a_3 n-t \} \quad \text{w.h.p.}
\end{equation*}
Moreover, by \Cref{claim:trapping_martingale},
\begin{equation*}
    \inf \{t \in \bbN, M_t > a_3 n-t \} > T \quad \text{w.h.p.},
\end{equation*}
which gives the desired result.
\end{proof}

By symmetry, the following result can be deduced directly from \Cref{lem:main}.
\begin{corollary} \label{cor:main_lemma_reversed}
    Let $\{X_t\}$ be a Markov chain on $\mathbb{Z}$ and $\varepsilon > 0$.
    If there are $a_1<a_2<a_3 \in \bbR$ s.t.
\begin{itemize}
    \item[(i)] for every $x_t \in \{ \lceil a_1 \, n \rceil,..., \lfloor a_3 \, n \rfloor \}$, $\E(X_{t+1} \mid X_t = x_t) \geq x_t-1$,
    \item[(ii)] for every $x_t > a_3 \, n$, $\Pr (X_{t+1} < a_2 \, n \mid X_t = x_t)=\exp(-n^{\Omega(1)})$,
    \item[(iii)] $\Pr(|X_{t+1}-\E\pa{X_{t+1} \mid X_t}| > n^{1/2+\varepsilon/4}) < 2\exp \pa{ -2n^{\varepsilon/2}}$,
\end{itemize}
then for $X_0=\frac{a_1+a_2}{2} \cdot n$ and for $n$ large enough, we have w.h.p.
\begin{equation*}
    \inf \{t \in \bbN, X_t \leq a_1 \, n\} \geq n^{1-\varepsilon}.
\end{equation*}
\end{corollary}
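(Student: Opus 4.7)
The plan is to deduce the corollary directly from \Cref{lem:main} by applying it to the negated Markov chain $Y_t := -X_t$. Since the hypotheses of the corollary describe a sub-martingale-like process that cannot skip an interval on the way down, the substitution $Y_t = -X_t$ flips this into a super-martingale-like process that cannot skip an interval on the way up, which is exactly the setting of \Cref{lem:main}. Concretely, I would set $a_1' := -a_3$, $a_2' := -a_2$, and $a_3' := -a_1$, so that $a_1' < a_2' < a_3'$, and verify that the three hypotheses of \Cref{lem:main} hold for $\{Y_t\}_{t \in \bbN}$ with these parameters.

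For condition (i), using the identity $\lceil -x \rceil = -\lfloor x \rfloor$, the set $\{\lceil a_1' n \rceil, \ldots, \lfloor a_3' n \rfloor\}$ coincides with $\{-\lfloor a_3 n \rfloor, \ldots, -\lceil a_1 n \rceil\}$, i.e., the image under negation of the set on which condition (i) of the corollary is assumed. On this set,
\begin{equation*}
    \E(Y_{t+1} \mid Y_t = y_t) = -\E(X_{t+1} \mid X_t = -y_t) \leq -\pa{-y_t - 1} = y_t + 1,
\end{equation*}
which is exactly (i) of \Cref{lem:main}. For condition (ii), the event $\{Y_t < a_1' n\}$ is the same as $\{X_t > a_3 n\}$, and $\{Y_{t+1} > a_2' n\} = \{X_{t+1} < a_2 n\}$, so (ii) of \Cref{lem:main} is a verbatim restatement of (ii) of the corollary. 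Condition (iii) is preserved as is, since $|Y_{t+1} - \E(Y_{t+1} \mid Y_t)| = |X_{t+1} - \E(X_{t+1} \mid X_t)|$.

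Finally, the initial condition $X_0 = \frac{a_1+a_2}{2} \cdot n$ translates into $Y_0 = -\frac{a_1+a_2}{2} \cdot n = \frac{a_2'+a_3'}{2} \cdot n$, matching the hypothesis of \Cref{lem:main}. Applying that theorem then yields
\begin{equation*}
    \inf \{t \in \bbN, \; Y_t \geq a_3' \, n\} \geq n^{1-\varepsilon} \quad \text{w.h.p.,}
\end{equation*}
and since $Y_t \geq a_3' \, n \iff X_t \leq a_1 \, n$, this is exactly the claimed bound. Because the argument is pure symmetrization, there is no substantive mathematical obstacle; the only point requiring care is the bookkeeping of floors and ceilings under negation, which resolves cleanly via the identity noted above.
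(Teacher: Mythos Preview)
Your proposal is correct and follows exactly the paper's approach: apply \Cref{lem:main} to the negated chain $-X_t$ with constants $-a_3 < -a_2 < -a_1$. In fact, your write-up is more detailed than the paper's own one-line proof, and the bookkeeping (floors/ceilings, initial condition, translation of each hypothesis) is handled cleanly.
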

\begin{proof}
    It is easy to check that if the assumptions of \Cref{cor:main_lemma_reversed} hold for $\{X_t\}_{t \in \bbN}$ and some constants $a_1 < a_2 < a_3$, then the assumptions of \Cref{lem:main} hold w.r.t. $\{-X_t\}_{t \in \bbN}$ and $-a_3 < -a_2 < -a_1$ respectively -- and the conclusion follows.
\end{proof}

\section{The Main Proof} \label{sec:main_proof}

\subsection{The Voter Dynamics} \label{sec:voter_model}

First, we focus on the Voter dynamics (\Cref{prot:voter})
and show that it satisfies the lower bound stated in \Cref{thm:main}, which we will prove in its full generality in \Cref{sec:general_lower_bound}.
We start by observing that for~$g^\textrm{voter}$ defined in \Cref{eq:def_voter}, and by definition in \Cref{eq:Fn_def},
\begin{equation*}
    F_n^\textrm{voter}(p) = -p + \sum_{k=0}^\ell \binom{\ell}{k} p^k (1-p)^{\ell - k} \cdot \frac{k}{\ell} = -p + p = 0.
\end{equation*}
Then, we conclude by applying the following result. Due to space constraints, its proof is deferred to \Cref{sec:missing_proofs}.

\begin{lemma} \label{thm:voter_lower}
    Consider a protocol~$\prot$ satisfying $F_n = 0$ for every~$n$ large enough.
	There exists a sequence of configurations~$\{C_n\}$ such that for every~$\epsilon > 0$, with high probability,
	\begin{equation*}
		\tau_n(\prot,C_n) > n^{1-\varepsilon}.
	\end{equation*}
\end{lemma}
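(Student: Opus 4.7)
The plan is to invoke \Cref{lem:main} as a black box on the Markov chain $\{X_t\}_{t \in \bbN}$ after fixing the correct opinion to be $z := 1$, so that the ``target'' consensus state is $X = n$. Since $F_n \equiv 0$ by assumption, \Cref{lem:expectation_bounds} immediately gives $|\E(X_{t+1} \mid X_t = x_t) - x_t| \leq 1$ for every $x_t \in \{0, \dots, n\}$, which is precisely hypothesis (i) of the black-box theorem on any sub-interval of $\{0,\dots,n\}$.

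To choose the thresholds, I would fix $a_1 := 1/4$ (any constant in $(0,1)$ works), let $a_2 := y(a_1, \ell) \in (a_1, 1)$ be the constant supplied by \Cref{lemma:no_jump_to_consensus}, and set $a_3 := (1 + a_2)/2 \in (a_2, 1)$, so that $a_1 < a_2 < a_3$ are all constants independent of $n$. Hypothesis (ii) then follows directly from \Cref{lemma:no_jump_to_consensus} applied with $c := a_1$: for every $x_t < a_1 n$, we get $\Pr(X_{t+1} > a_2 n \mid X_t = x_t) \leq \exp(-2 n^{1/2}) = \exp(-n^{\Omega(1)})$. Hypothesis (iii) is a standard concentration statement: conditionally on $X_t$, the quantity $X_{t+1} - z$ is a sum of $n-1$ independent $\{0,1\}$-valued random variables (non-source agents update independently), so Hoeffding's inequality with deviation $\lambda := n^{1/2+\varepsilon/4}$ yields
\begin{equation*}
\Pr\!\left(|X_{t+1} - \E(X_{t+1}\mid X_t)| > n^{1/2+\varepsilon/4}\right) \leq 2\exp\!\left(-\tfrac{2\lambda^2}{n-1}\right) \leq 2\exp(-2 n^{\varepsilon/2})
\end{equation*}
for $n$ large enough, as required.

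With the three hypotheses in hand, I then apply \Cref{lem:main} with parameter $\varepsilon/2$ in place of $\varepsilon$ and initial configuration
\begin{equation*}
C_n := \Bigl( 1, \Bigl\lfloor \tfrac{a_2 + a_3}{2}\, n \Bigr\rfloor \Bigr),
\end{equation*}
which crucially does not depend on $\varepsilon$, matching the quantifier order of the statement. The conclusion of \Cref{lem:main} yields $X_t < a_3 n$ for every $t < n^{1-\varepsilon/2}$ with high probability. Since $a_3 < 1$, this forces $X_t \neq n$ throughout the window, so the process has not reached the (absorbing) consensus, and hence $\tau_n(\prot, C_n) \geq n^{1-\varepsilon/2} > n^{1-\varepsilon}$ for every $\varepsilon > 0$ once $n$ is large enough.

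I do not expect any serious obstacle here: the intellectual content is packaged inside \Cref{lem:main}, and what remains is essentially a bookkeeping argument. The only points requiring mild care are choosing the constants $a_1, a_2, a_3$ so that \Cref{lemma:no_jump_to_consensus} produces the right no-jump property, and verifying the Hoeffding bound (iii). A symmetric alternative would be to take $z := 0$ and invoke \Cref{cor:main_lemma_reversed} instead, with thresholds reflected around $1/2$; the two routes are interchangeable.
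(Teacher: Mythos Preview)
Your proposal is correct and follows essentially the same route as the paper: fix $z=1$, apply \Cref{lem:main}, and verify (i) via \Cref{lem:expectation_bounds} with $F_n\equiv 0$, and (iii) via Hoeffding. The only difference is in hypothesis~(ii): the paper simply takes $a_1,a_2,a_3=\tfrac14,\tfrac12,\tfrac34$ and observes that, because $F_n\equiv 0$ forces $\E(X_{t+1}\mid X_t=x_t)\le x_t+1<a_1 n+1$, a jump past $a_2 n$ is already a large Hoeffding deviation---so \Cref{lemma:no_jump_to_consensus} is not needed here, and one avoids its implicit reliance on $g_n^{[0]}(0)=0$. Your $\varepsilon/2$ maneuver is harmless but unnecessary (the configuration $C_n$ already does not depend on $\varepsilon$); if you keep it, remember to verify~(iii) with $\varepsilon/2$ as well.
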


\subsection{General Case} \label{sec:general_lower_bound}

Our main result (\Cref{thm:main}) will follow as a consequence of \Cref{thm:general_lower_bound} below.
\begin{theorem} \label{thm:general_lower_bound}
    For every~$\varepsilon > 0$ and every protocol~$\prot$, there exists an infinite set~$\support \subseteq \bbN$ and a sequence of configurations~$\{C_n\}$ such that for every~$n \in \support$,
    \begin{equation*}
        \Pr \pa{\tau_n(\prot,C_n) > n^{1-\varepsilon}} \geq 1 - \frac{1}{n^{\Omega(1)}}.
    \end{equation*}
    In other words, the convergence time of~$\prot$ restricted to $\support$ is greater than $n^{1-\varepsilon}$ w.h.p.
\end{theorem}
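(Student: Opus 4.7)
The strategy exploits the polynomial nature of $F_n$ (whose degree is bounded by $\ell+1$ since $\ell$ is constant) together with the drift lemma \Cref{lem:main} (and its symmetric counterpart \Cref{cor:main_lemma_reversed}). First, I note that $F_n$ is a polynomial in $p$ of degree at most $\ell+1$ with both $0$ and $1$ as roots — a direct calculation using $g_n^{[0]}(0)=0$ and $g_n^{[1]}(\ell)=1$ from \Cref{prop:necessary}. Hence $F_n$ has at most $\ell-1$ roots in $(0,1)$. Partitioning $(0,1)$ into $\ell$ equal-length sub-intervals, at least one is free of roots of $F_n$, so $F_n$ has constant sign there. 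Since there are only $\ell$ candidate sub-intervals and two possible signs, a double pigeonhole (over $n\in\bbN$) produces a fixed sub-interval $J=[\alpha,\beta]$, a fixed sign $\sigma\in\{+,-\}$, and an infinite subsequence $\support_0\subseteq\bbN$ along which $\sigma\cdot F_n>0$ holds uniformly on $J$. A further extraction $\support\subseteq\support_0$, using the compactness of the coefficient space (each $g_n^{[b]}(k)\in[0,1]$), ensures that the $F_n$ converge coefficient-wise to a limit polynomial $F_\infty$; if $F_\infty\equiv 0$, the statement follows from \Cref{thm:voter_lower} applied along $\support$, so I assume $F_\infty\not\equiv 0$.

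Without loss of generality $\sigma=-$ (the case $\sigma=+$ being symmetric via \Cref{cor:main_lemma_reversed} with correct opinion $z=0$). I fix $z=1$, pick $a_1<a_2<a_3$ strictly inside $J$, and take the initial configuration $C_n=(1,\floor{(a_2+a_3)n/2})$. I then aim to apply \Cref{lem:main} to conclude $X_t<a_3\,n<n$ for at least $n^{1-\varepsilon}$ rounds w.h.p., which directly yields the desired lower bound on the convergence time. Assumption (i) of \Cref{lem:main} is immediate from \Cref{lem:expectation_bounds} and $F_n\leq 0$ on $[a_1,a_3]$. Assumption (iii) follows from Hoeffding's inequality: conditional on $X_t$, the updates $\{X_{t+1}^{(i)}\}_{i\in I}$ are independent Bernoulli random variables (the samples $S_t^{(i)}$ being independent across $i$), so $|X_{t+1}-\E\pa{X_{t+1}\mid X_t}|=O(\sqrt{n\log n})$ w.h.p., comfortably below $n^{1/2+\varepsilon/4}$.

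The main obstacle is assumption (ii): from $X_t<a_1\,n$, I need $X_{t+1}<a_2\,n$ with extremely high probability. My plan is to exploit the factorization $F_n(p)=p(1-p)H_n(p)$, where $H_n$ has degree at most $\ell-1$ and coefficients uniformly bounded by an explicit constant $K(\ell)$. This yields $|F_n(p)|\leq p(1-p)K(\ell)$ on $[0,1]$, so the expected one-step increment from $X_t<a_1\,n$ is at most $n\cdot a_1 K(\ell)$; combined with Hoeffding's concentration, this gives assumption (ii) whenever $a_2>a_1(1+K(\ell))$. When $J$ lies close to $0$, I freely choose $a_1$ small inside $J$ and this holds; the symmetric bound $|F_n(p)|\leq(1-p)K(\ell)$ near $1$ covers the case where $J$ lies close to $1$, via \Cref{cor:main_lemma_reversed}. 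The truly delicate case is the interior one, where $J\subseteq[c,1-c]$ for some constant $c>0$: here the left endpoint $\alpha$ must be an interior root across which $F_n$ changes sign (making $\alpha$ a stable fixed point of the drift). A Taylor expansion around $\alpha$ gives $|F_n(p)|=O(|p-\alpha|)$ locally, which controls the one-step drift from just below $\alpha$; by picking $a_1$ slightly above $\alpha$ and $a_2$ further inside $J$ at the appropriate scale, assumption (ii) can still be restored, with \Cref{lemma:no_jump_to_consensus} supplying the complementary bound that $X_{t+1}$ cannot jump anywhere near consensus in a single round. I expect this interior case to be the principal technical hurdle of the proof.
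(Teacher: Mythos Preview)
Your overall architecture is right---bounded degree of $F_n$, pigeonhole/compactness to fix an interval of constant sign, then the drift lemma---but the interior case is a genuine gap, not just a ``technical hurdle.'' Assumption (ii) of \Cref{lem:main} must hold for \emph{every} $x_t<a_1\,n$, not only for $x_t$ near the left boundary of $J$. Your Taylor argument around a root controls the drift only locally; for $x_t/n$ far below that root, $F_n(x_t/n)$ can be a large positive constant (the only a priori bound is $F_n(p)\leq 1-p$), so $\E(X_{t+1}\mid X_t=x_t)$ may land well above $a_2\,n$, and Hoeffding concentration then makes $\Pr(X_{t+1}>a_2\,n)$ close to $1$, not $\exp(-n^{\Omega(1)})$. \Cref{lemma:no_jump_to_consensus} does not help either: it only prevents $X_{t+1}$ from getting close to $n$, not from crossing an interior threshold like $a_2\,n$. (A second, smaller issue: invoking \Cref{thm:voter_lower} when $F_\infty\equiv 0$ is illegitimate, since that lemma requires $F_n=0$ exactly; you should instead split on whether $F_n\equiv 0$ for infinitely many $n$.)

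The paper avoids the interior case entirely by a specific choice of interval. After extracting a subsequence along which the roots of $F_n$ in $[0,1]$ converge (Bolzano--Weierstrass on the root vector), it picks the \emph{last} gap before $1$, i.e., $(r_\infty^{(k_0-1)},r_\infty^{(k_0)})$ with $r_\infty^{(k_0)}=1$. Now $a_1,a_2,a_3$ can be taken anywhere in $(r_\infty^{(k_0-1)},1)$, and the right endpoint being $1$ is exactly what makes assumption (ii) tractable: in the $F_n<0$ case, \Cref{lemma:no_jump_to_consensus} directly supplies the required $a_2\in(a_1,1)$; in the $F_n>0$ case (with $z=0$, via \Cref{cor:main_lemma_reversed}), the interval $[r_n^{(k_0)},1]$ shrinks to a point, so $|F_n|$ is uniformly small there, forcing $p+F_n(p)>a_3$ on all of $[a_3,1]$ and hence ruling out downward jumps past $a_2$. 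The key idea you are missing is not more analysis on a generic $J$, but rather this deliberate choice of the interval adjacent to $1$.
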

\begin{proof}
Recall the definition of~$F_n$ in \Cref{eq:Fn_def}.
If there is $N \in \bbN$ s.t. for every~$n \geq N$, $F_n = 0$, then we can conclude by applying \Cref{thm:voter_lower}.
Otherwise, there is an infinite set~$\support_0 \subseteq \bbN$ s.t. for every~$n \in \support_0$, $F_n \neq 0$, which we will assume from now on.

By definition in \Cref{eq:Fn_def}, $F_n$ is a polynomial of degree at most~$\ell+1$.
For~$n \in \support_0$, let $d_n$ be the number of roots of~$F_n$ in the interval~$[0,1]$ (counted with multiplicity). 
By \Cref{prop:necessary}, $g_n^{[0]}(0) = 0$ and $g_n^{[1]}(\ell) = 1$, so $F_n(0) = F_n(1) = 0$, and thus $d_n \in \{ 2, \ldots, \ell+1 \}$.
Since $d_n$ can only adopt finitely many values, there exists $d \in \{ 2, \ldots, \ell+1\}$ and an infinite set~$\support_1 \subseteq \support_0$ s.t. for every~$n \in \support_1$, $d_n = d$.

For~$n \in \support_1$, let~$0 = r_n^{(1)} \leq \ldots \leq r_n^{(d)} = 1$ be the roots of $F_n$ within the interval~$[0,1]$, with multiplicity, in increasing order.
The sequence $\big\{ (r_n^{(1)}, \ldots , r_n^{(d)}) \big\}_{n \in \support_1}$ is bounded in $\bbR^d$ by definition. Hence, by the Bolzano-Weierstrass theorem, there exists a converging sub-sequence of $\big\{ ( r_n^{(1)}, \ldots, r_n^{(d)}) \big\}_{n \in \support_1}$; i.e., there are $0 = \rinf^{(1)} \leq \ldots \leq \rinf^{(d)} = 1$ together with an infinite set $\support_2 \subseteq \support_1$ s.t. for every $k \in [d]$,
\begin{equation} \label{eq:root_limit}
    \lim_{\substack{n \rightarrow +\infty \\ n \in \support_2}} r_{n}^{(k)} = \rinf^{(k)}.
\end{equation}
Let $k_0 := \min \{ k \in [d], \rinf^{(k)} = 1 \}$.
Note that $k_0 \geq 2$ (since $\rinf^{(1)} = 0$) and by definition, $\rinf^{(k_0-1)} < 1 = \rinf^{(k_0)}$.
Moreover, for every~$n \in\support_2$, $F_n$ is non-zero and has constant sign on~$(r_n^{(k_0-1)},r_n^{(k_0)})$.
Therefore, there exists an infinite set~$\support_3 \subseteq \support_2$ s.t.
\begin{enumerate}
    \item either $\forall n \in \support_3$, $F_n < 0$ on $(r_n^{(k_0-1)},r_n^{(k_0)})$,
    \item or $\forall n \in \support_3$, $F_n > 0$ on $(r_n^{(k_0-1)},r_n^{(k_0)})$.
\end{enumerate} 
In the remainder of the proof, we will analyse these two cases separately. The reader is strongly encouraged to consult \Cref{fig:proof_sketch_negative} and \ref{fig:proof_sketch_positive} respectively for a clearer understanding of the argument.

\begin{figure} [t]
    \centering
    \includegraphics[width=0.9\linewidth]{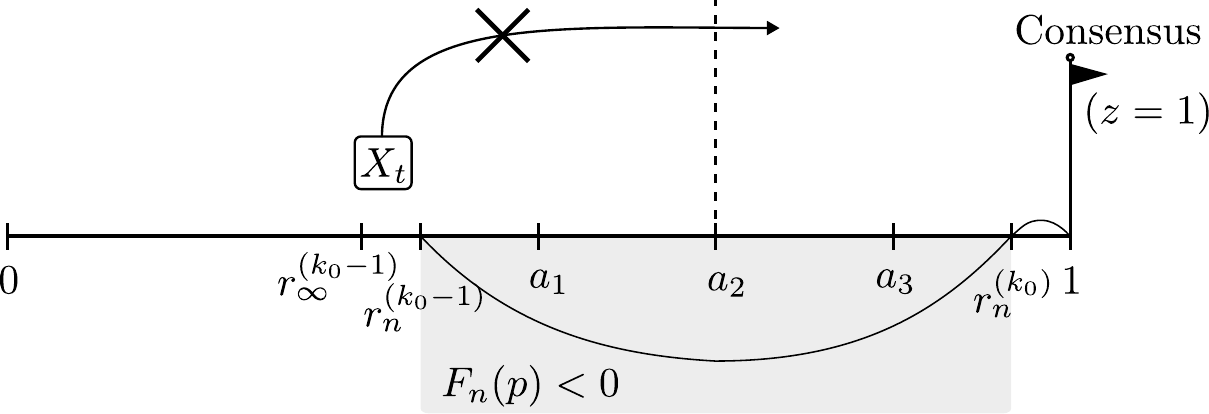}
    \caption{\textbf{Illustration of the arguments for Case 1.}
    We consider a configuration in which the correct opinion is~$1$.
    Constant~$a_1$ is fixed arbitrarily in the interval~$(\rinf^{(k_0-1)},1)$
    Then, $a_2$ is chosen according to \Cref{lemma:no_jump_to_consensus} to ensure that $X_t$ cannot jump from below $a_1 \, n$ to above $a_2 \, n$.
    Finally, $a_3$ is set anywhere in the interval~$(a_2,1]$.
    By assumption, $F_n < 0$ on $[a_1,a_3]$, and we can eventually apply \Cref{lem:main}.
    }
    \label{fig:proof_sketch_negative}
\end{figure}

\paragraph*{Case 1.}
Let $a_1 \in (\rinf^{(k_0-1)},1)$. Let~$a_2 = a_2(a_1,\ell) \in (a_1,1)$ given by \Cref{lemma:no_jump_to_consensus}, s.t. for~$n$ large enough,
\begin{equation*}
    \text{for every } x_t \leq a_1 \, n, \quad \Pr( X_{t+1} \leq a_2 \, n \mid X_t = x_t ) \geq 1 - \exp( - 2 \, n^{-1/2}).
\end{equation*}
Let~$a_3 \in (a_2,1)$.
By \Cref{eq:root_limit}, for~$n$ large enough, $r_n^{(k_0-1)} < a_1$ and $r_n^{(k_0)} > a_3$.
We now wish to use \Cref{lem:main}, with~$a_1,a_2,a_3$ as we just defined. Let us check that every assumption holds:
\begin{itemize}
    \item For every~$x_t \in \{a_1 \, n, \ldots, a_3 \, n \}$, we have~$x_t/n \in [a_1,a_3] \subset [r_n^{(k_0-1)},r_n^{(k_0)}]$,
    and so $F_n(x_t/n) < 0$ by assumption. Therefore, \Cref{eq:expectation_bound_z1n} gives
    \begin{equation}
        \E \pa{ X_{t+1} \mid X_t = x_t} \leq x_t + n \, F_n\pa{\frac{x_t}{n}} + 1 < x_t+1,
    \end{equation}
    so assumption (i) in the statement of \Cref{lem:main} holds.

    \item Assumption (ii) holds by \Cref{lemma:no_jump_to_consensus}.

    \item Finally, assumption (iii) follows from \nameref{thm:additive_chernoff_bound}: conditioning on $X_t$, $X_{t+1}$ is the sum of~$n$ Bernoulli random variables, then the result follow choosing $\delta=n^{1/2 + \varepsilon/4}$.
\end{itemize}
Assuming that the source has opinion~$z = 1$, we apply \Cref{lem:main}, which implies the existence of an initial configuration~$C_n$ s.t. the convergence time is bounded w.h.p.: 
\begin{equation*}
    \tau_n(g,C_n) =  \inf \{t \in \bbN, X_t = n\} \geq \inf \{t \in \bbN, X_t \geq a_3 \, n\} \geq n^{1-\varepsilon}.
\end{equation*}

\begin{figure} [t]
    \centering
    \includegraphics[width=0.9\linewidth]{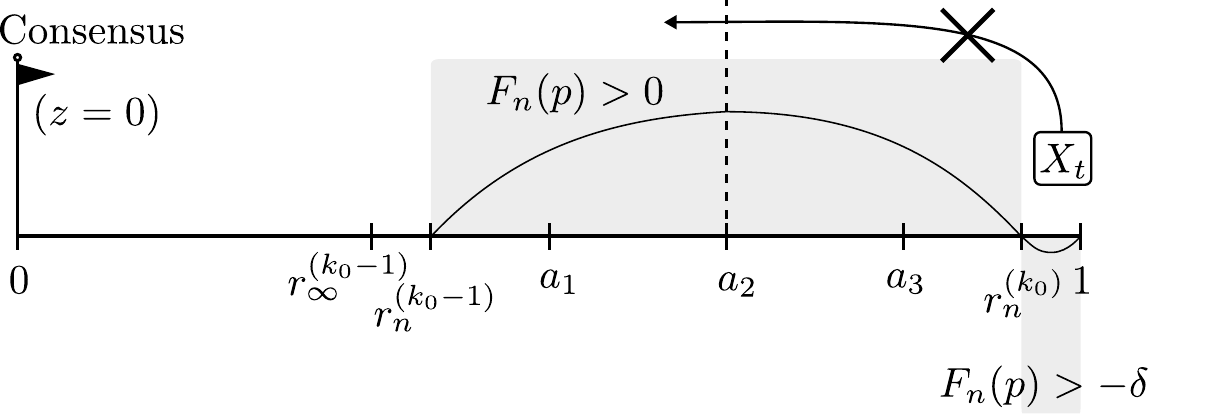}
    \caption{\textbf{Illustration of the arguments for Case 2.}
    We consider a configuration in which the correct opinion is~$0$.
    Constants $a_1, a_2$ and $a_3$ are chosen arbitrarily in the interval~$(\rinf^{(k_0-1)},1)$.
    By assumption, $F_n > 0$ on $[a_1,a_3]$. Moreover, once~$a_2$ and $a_3$ are fixed, we give a lower-bound on~$F_n$ on the interval~$[r_n^{(k_0)},1]$, by letting $r_n^{(k_0)}$ be sufficiently close to~$1$, in order to ensure that $X_t$ cannot jump from above $a_3 \, n$ to below $a_2 \, n$.
    Eventually, we are able to apply \Cref{cor:main_lemma_reversed}.
    }
    \label{fig:proof_sketch_positive}
\end{figure}

\paragraph*{Case 2.}
Let~$a_1,a_2,a_3 \in (\rinf^{(k_0-1)},1)$, with $\rinf^{(k_0-1)}<a_1<a_2<a_3<1$.
First, we show that by taking~$n$ large enough, we can have~$F_n$ be arbitrarily close to $0$ on the interval~$[r_n^{(k_0)},1]$.
\begin{claim} \label{claim:bound_on_F}
    For every~$\delta > 0$, 
    for $n$ large enough: $r_n^{(k_0)} > (1+a_3)/2$,
    and for every~$p \in [r_n^{(k_0)},1]$, $F_n(p) > - \delta$.
\end{claim}
\begin{proof}
    Let~$\delta >0$ and $n \in \support_2$.
    Since $F_n$ has bounded coefficients and degree~$d$, and since $F_n(r_n^{(k_0)}) = F_n(1) = 0$, by \Cref{claim:polynomial} in \Cref{sec:missing_proofs} we obtain the existence of~$C_0$ s.t.
    \begin{equation} \label{eq:bound_on_F}
        \text{for every } p \in [r_n^{(k_0)},1], \quad |F_n(p)| < C_0 \cdot (1-r_n^{(k_0)}).
    \end{equation}
    By \Cref{eq:root_limit} and by definition of~$k_0$, $r_n^{(k_0)}$ tends to~$1$ as $n$ goes to $+\infty$.
    If $n$ is large enough,
    \begin{equation} \label{eq:def_rn_k0}
        r_n^{(k_0)} > \max \left\{ 1-\frac{\delta}{C_0} , \frac{1+a_3}{2} \right\}.
    \end{equation}
    By \Cref{eq:bound_on_F}, this implies that $|F_n(p)|< \delta$ on $[r_n^{(k_0)},1]$, which, together with \Cref{eq:def_rn_k0}, concludes the proof of \Cref{claim:bound_on_F}.
\end{proof}

Now, we use the previous result to establish a lower bound on~$p + F_n(p)$ when~$p \geq a_1$.
\begin{claim} \label{claim:trapped_on_the_side}
    For~$n$ large enough,
    \begin{itemize}
        \item for every~$p \in [a_1,a_3]$, $p+F_n(p) > p$.
        \item for every~$p \in [a_3,1]$, $p+F_n(p) > a_3$.
    \end{itemize}
\end{claim}
\begin{proof}
    For $p\in [a_1,r_n^{(k_0)})$, $F_n(p) > 0$ by assumption.
    Therefore:
    \begin{itemize}
        \item for every~$p \in [a_1,a_3]$, $p+F_n(p) > p$.
        \item for every~$p \in [a_3,r_n^{(k_0)})$, $p+F_n(p) > p \geq a_3$.
    \end{itemize}
    All is left to prove is that $p+F_n(p) > a_3$ on $[r_n^{(k_0)},1]$.
    Let~$\delta = (1-a_3)/2$, and let $n$ be large enough for \Cref{claim:bound_on_F} to hold w.r.t. $\delta$. For every~$p \in [r_n^{(k_0)},1]$, we have
    \begin{align*}
        p+F_n(p) &> r_n^{(k_0)} - \delta & \text{(by \Cref{claim:bound_on_F} and definition of $p$)} \\
        &= a_3 + \pa{\frac{1-a_3}{2} - \delta} + \pa{r_n^{(k_0)} - \frac{1+a_3}{2}} & \\
        &> a_3, & \text{(by \Cref{claim:bound_on_F} and definition of~$\delta$)}
    \end{align*}
    which concludes the proof of \Cref{claim:trapped_on_the_side}.
\end{proof}
Finally, similarly to the first case, we use \Cref{cor:main_lemma_reversed} to conclude. Again, we start by checking that all assumptions hold. Let~$n$ be large enough for \Cref{claim:trapped_on_the_side} to hold.
    \begin{itemize}
        \item For every~$x_t \in \{a_1 \, n, \ldots, a_3 \, n\}$, we have~$x_t/n + F_n(x_t/n) > x_t/n$ by \Cref{claim:trapped_on_the_side}. Therefore, \Cref{eq:expectation_bound_z0n} rewrites
        \begin{equation*}
            \E \pa{ X_{t+1} \mid X_t = x_t } \geq x_t + n \, F_n\pa{\frac{x_t}{n}} - 1 \geq x_t - 1,
        \end{equation*}
        so assumption (i) in the statement of \Cref{cor:main_lemma_reversed} holds.
        
        \item For every~$x_t \in \{a_3 \, n, \ldots, n\}$, we have~$x_t/n + F_n(x_t/n) > a_3$ by \Cref{claim:trapped_on_the_side}. Therefore, \Cref{eq:expectation_bound_z0n} rewrites
        \begin{equation*}
            \E \pa{ X_{t+1} \mid X_t = x_t } \geq x_t + n \, F_n\pa{\frac{x_t}{n}} - 1 \geq a_3 \,n -1.
        \end{equation*}
        Therefore, by \nameref{thm:additive_chernoff_bound},
        \begin{align*}
            \Pr \pa{ X_{t+1} < a_2 \, n \mid X_t = x_t } &\geq \Pr \pa{ X_{t+1} < \E \pa{X_{t+1}} - \frac{a_3-a_2}{2} n \mid X_t = x_t } \\
            &\leq \exp\pa{-2\pa{\frac{a_3-a_2}{2}}^2 n},
        \end{align*}
        and so assumption (ii) holds.
        
        \item Finally, assumption (iii) follows from \nameref{thm:additive_chernoff_bound}, as in the first case.
    \end{itemize}
    Assuming that the source has opinion~$z=0$, we apply \Cref{cor:main_lemma_reversed}, which implies the existence of an initial configuration~$C_n$ s.t. the convergence time is bounded:   
    \begin{equation*}
         \tau_n(g,C_n) =  \inf \{t \in \bbN, X_t = 0\} \geq \inf \{t \in \bbN, X_t \leq a_1 n\} \geq n^{1-\varepsilon}.
    \end{equation*}
\end{proof}

\section{Discussion and Future Works}

In this paper, we explore the minimal requirements for simultaneously reaching consensus and propagating information in a distributed system.
We consider memory-less and anonymous agents, which update their opinion synchronously after observing the opinions of a few other agents sampled uniformly at random, and whose goal is to converge on the correct opinion held by a single ``source'' individual.
In addition, we adopt the self-stabilizing framework, which in a memory-less setting, means that convergence must happen for any possible initialization of the opinions of the agents (including the source).
Under this model, we show that to obtain a convergence time better than $n^{1-\epsilon}$, the number~$\ell$ of samples obtained by each agent in every round must necessarily tend towards infinity as $n$ increases.
Our result extends the range of values of~$\ell$ for which the performance of the ``minority'' dynamics (\Cref{prot:minority}) is characterized.
Our technique, which consists in translating the sample size into the degree of a well-chosen polynomial, and then inspecting its roots, is simple yet quite novel (to the best of our knowledge), and may be used to show similar results in other settings.

The ultimate goal of our work is to fully characterize the complexity of the bit-dissemination problem in the parallel setting and in the absence of memory, as a function of the sample size.
Regarding values of~$\ell$ allowing poly-logarithmic convergence time, there is still a large gap between our lower bound $\ell = \Omega(1)$ and the upper bound $\ell = O(\sqrt{n\log n})$ mentioned in~\cite{becchetti_minority_2024}.
Closing or narrowing this gap, even specifically for the minority dynamics, would be of appreciable interest in our opinion.

Another natural continuation would be to generalize our result to protocols using a constant amount of memory. If feasible, the resulting lower bound would still be compatible with the algorithm of~\cite{korman_early_2022}, which requires $\Omega(\log\log n)$ bits of memory.

\section*{Acknowledgements}

This work has been supported by the AID INRIA-DGA project n°2023000872 ``BioSwarm''.
The authors would like to thank Andrea Clementi for his feedback, and Amos Korman for helpful comments and coming up with the minority dynamics. Thanks also to Luca Trevisan for preliminary discussions, and for everything else.



\bibliography{main}

\begin{thebibliography}{10}

\bibitem{afek_biological_2011}
Yehuda Afek, Noga Alon, Omer Barad, Eran Hornstein, Naama Barkai, and Ziv
  {Bar-Joseph}.
\newblock A biological solution to a fundamental distributed computing problem.
\newblock {\em Science}, 331(6014):183--185, January 2011.
\newblock \href {https://doi.org/10.1126/science.1193210}
  {\path{doi:10.1126/science.1193210}}.

\bibitem{aspnes_introduction_2009}
James Aspnes and Eric Ruppert.
\newblock An introduction to population protocols.
\newblock In Beno{\^i}t Garbinato, Hugo Miranda, and Lu{\'i}s Rodrigues,
  editors, {\em Middleware for {{Network Eccentric}} and {{Mobile
  Applications}}}, pages 97--120. {Springer}, {Berlin, Heidelberg}, 2009.
\newblock \href {https://doi.org/10.1007/978-3-540-89707-1_5}
  {\path{doi:10.1007/978-3-540-89707-1_5}}.

\bibitem{ballerini_interaction_2008}
M.~Ballerini, N.~Cabibbo, R.~Candelier, A.~Cavagna, E.~Cisbani, I.~Giardina,
  V.~Lecomte, A.~Orlandi, G.~Parisi, A.~Procaccini, M.~Viale, and
  V.~Zdravkovic.
\newblock Interaction ruling animal collective behavior depends on topological
  rather than metric distance: {{Evidence}} from a field study.
\newblock {\em Proceedings of the National Academy of Sciences of the United
  States of America}, 105(4):1232--1237, January 2008.
\newblock \href {https://doi.org/10.1073/pnas.0711437105}
  {\path{doi:10.1073/pnas.0711437105}}.

\bibitem{bastide_selfstabilizing_2021}
Paul Bastide, George Giakkoupis, and Hayk Saribekyan.
\newblock Self-stabilizing clock synchronization with 1-bit messages.
\newblock In {\em Proceedings of the 2021 {{ACM-SIAM Symposium}} on {{Discrete
  Algorithms}} ({{SODA}})}, Proceedings, pages 2154--2173. {Society for
  Industrial and Applied Mathematics}, January 2021.
\newblock \href {https://doi.org/10.1137/1.9781611976465.129}
  {\path{doi:10.1137/1.9781611976465.129}}.

\bibitem{becchetti_role_2023}
Luca Becchetti, Andrea Clementi, Amos Korman, Francesco Pasquale, Luca
  Trevisan, and Robin Vacus.
\newblock On the role of memory in robust opinion dynamics.
\newblock In Edith Elkind, editor, {\em Proceedings of the Thirty-Second
  International Joint Conference on Artificial Intelligence, {IJCAI-23}},
  volume~1, pages 29--37, {Macao}, August 2023. International Joint Conferences
  on Artificial Intelligence Organization.
\newblock \href {https://doi.org/10.24963/ijcai.2023/4}
  {\path{doi:10.24963/ijcai.2023/4}}.

\bibitem{becchetti_consensus_2020}
Luca Becchetti, Andrea Clementi, and Emanuele Natale.
\newblock Consensus dynamics: An overview.
\newblock {\em ACM SIGACT News}, 51(1):58--104, March 2020.
\newblock \href {https://doi.org/10.1145/3388392.3388403}
  {\path{doi:10.1145/3388392.3388403}}.

\bibitem{becchetti_minority_2024}
Luca Becchetti, Andrea Clementi, Francesco Pasquale, Luca Trevisan, Robin
  Vacus, and Isabella Ziccardi.
\newblock The minority dynamics and the power of synchronicity.
\newblock In {\em Proceedings of the 2024 {{Annual ACM-SIAM Symposium}} on
  {{Discrete Algorithms}} ({{SODA}})}, Proceedings, pages 4155--4176. {Society
  for Industrial and Applied Mathematics}, Alexandria, Virginia, U.S., January
  2024.
\newblock \href {https://doi.org/10.1137/1.9781611977912.144}
  {\path{doi:10.1137/1.9781611977912.144}}.

\bibitem{bialek_statistical_2012}
William Bialek, Andrea Cavagna, Irene Giardina, Thierry Mora, Edmondo
  Silvestri, Massimiliano Viale, and Aleksandra~M. Walczak.
\newblock Statistical mechanics for natural flocks of birds.
\newblock {\em Proceedings of the National Academy of Sciences},
  109(13):4786--4791, March 2012.
\newblock \href {https://doi.org/10.1073/pnas.1118633109}
  {\path{doi:10.1073/pnas.1118633109}}.

\bibitem{boczkowski_minimizing_2017}
Lucas Boczkowski, Amos Korman, and Emanuele Natale.
\newblock Minimizing message size in stochastic communication patterns: Fast
  self-stabilizing protocols with 3 bits.
\newblock In {\em Proceedings of the 2017 {{Annual ACM-SIAM Symposium}} on
  {{Discrete Algorithms}} ({{SODA}})}, Proceedings, pages 2540--2559. {Society
  for Industrial and Applied Mathematics}, January 2017.
\newblock \href {https://doi.org/10.1137/1.9781611974782.168}
  {\path{doi:10.1137/1.9781611974782.168}}.

\bibitem{chung_concentration_2006}
Fan Chung and Linyuan Lu.
\newblock Concentration inequalities and martingale inequalities: A survey.
\newblock {\em Internet Mathematics}, 3(1):79--127, January 2006.
\newblock \href {https://doi.org/10.1080/15427951.2006.10129115}
  {\path{doi:10.1080/15427951.2006.10129115}}.

\bibitem{clementi_search_2021}
Andrea Clementi, Francesco {d'Amore}, George Giakkoupis, and Emanuele Natale.
\newblock Search via parallel {{L{\'e}vy}} walks on z2.
\newblock In {\em Proceedings of the 2021 {{ACM Symposium}} on {{Principles}}
  of {{Distributed Computing}}}, {{PODC}}'21, pages 81--91, {New York, NY,
  USA}, July 2021. {Association for Computing Machinery}.
\newblock \href {https://doi.org/10.1145/3465084.3467921}
  {\path{doi:10.1145/3465084.3467921}}.

\bibitem{cruciani_phase_2021}
Emilio Cruciani, Hlafo~Alfie Mimun, Matteo Quattropani, and Sara Rizzo.
\newblock Phase transitions of the k-majority dynamics in a biased
  communication model.
\newblock In {\em Proceedings of the 22nd {{International Conference}} on
  {{Distributed Computing}} and {{Networking}}}, {{ICDCN}} '21, pages 146--155,
  New York, NY, USA, January 2021. Association for Computing Machinery.
\newblock \href {https://doi.org/10.1145/3427796.3427811}
  {\path{doi:10.1145/3427796.3427811}}.

\bibitem{cruciani_phase_2021a}
Emilio Cruciani, Emanuele Natale, Andr{\'e} Nusser, and Giacomo Scornavacca.
\newblock Phase transition of the 2-{{Choices}} dynamics on core--periphery
  networks.
\newblock {\em Distributed Computing}, 34(3):207--225, June 2021.
\newblock \href {https://doi.org/10.1007/s00446-021-00396-5}
  {\path{doi:10.1007/s00446-021-00396-5}}.

\bibitem{danchin_public_2004}
{\'E}tienne Danchin, Luc-Alain Giraldeau, Thomas~J. Valone, and Richard~H.
  Wagner.
\newblock Public information: From nosy neighbors to cultural evolution.
\newblock {\em Science}, 305(5683):487--491, July 2004.
\newblock \href {https://doi.org/10.1126/science.1098254}
  {\path{doi:10.1126/science.1098254}}.

\bibitem{dudek_universal_2018}
Bart{\l}omiej Dudek and Adrian Kosowski.
\newblock Universal protocols for information dissemination using emergent
  signals.
\newblock In {\em Proceedings of the 50th {{Annual ACM SIGACT Symposium}} on
  {{Theory}} of {{Computing}}}, {{STOC}} 2018, pages 87--99, {New York, NY,
  USA}, June 2018. {Association for Computing Machinery}.
\newblock \href {https://doi.org/10.1145/3188745.3188818}
  {\path{doi:10.1145/3188745.3188818}}.

\bibitem{emek_stone_2013}
Yuval Emek and Roger Wattenhofer.
\newblock Stone age distributed computing.
\newblock In {\em Proceedings of the 2013 {{ACM}} Symposium on {{Principles}}
  of Distributed Computing}, {{PODC}} '13, pages 137--146, {New York, NY, USA},
  July 2013. {Association for Computing Machinery}.
\newblock \href {https://doi.org/10.1145/2484239.2484244}
  {\path{doi:10.1145/2484239.2484244}}.

\bibitem{feinerman_individual_2017}
Ofer Feinerman and Amos Korman.
\newblock Individual versus collective cognition in social insects.
\newblock {\em Journal of Experimental Biology}, 220(1):73--82, January 2017.
\newblock \href {https://doi.org/10.1242/jeb.143891}
  {\path{doi:10.1242/jeb.143891}}.

\bibitem{feinerman_collaborative_2012}
Ofer Feinerman, Amos Korman, Zvi Lotker, and Jean-Sebastien Sereni.
\newblock Collaborative search on the plane without communication.
\newblock In {\em Proceedings of the 2012 {{ACM}} Symposium on {{Principles}}
  of Distributed Computing}, {{PODC}} '12, pages 77--86, {New York, NY, USA},
  July 2012. {Association for Computing Machinery}.
\newblock \href {https://doi.org/10.1145/2332432.2332444}
  {\path{doi:10.1145/2332432.2332444}}.

\bibitem{fudolig_analytic_2014}
Mikaela Irene~D. Fudolig and Jose Perico~H. Esguerra.
\newblock Analytic treatment of consensus achievement in the single-type
  zealotry voter model.
\newblock {\em Physica A: Statistical Mechanics and its Applications},
  413:626--634, November 2014.
\newblock \href {https://doi.org/10.1016/j.physa.2014.07.033}
  {\path{doi:10.1016/j.physa.2014.07.033}}.

\bibitem{ghaffari_distributed_2015}
Mohsen Ghaffari, Cameron Musco, Tsvetomira Radeva, and Nancy Lynch.
\newblock Distributed house-hunting in ant colonies.
\newblock In {\em Proceedings of the 2015 {{ACM Symposium}} on {{Principles}}
  of {{Distributed Computing}}}, {{PODC}} '15, pages 57--66, {New York, NY,
  USA}, July 2015. {Association for Computing Machinery}.
\newblock \href {https://doi.org/10.1145/2767386.2767426}
  {\path{doi:10.1145/2767386.2767426}}.

\bibitem{hassin_distributed_2001}
Yehuda Hassin and David Peleg.
\newblock Distributed probabilistic polling and applications to proportionate
  agreement.
\newblock {\em Information and Computation}, 171(2):248--268, December 2001.
\newblock \href {https://doi.org/10.1006/inco.2001.3088}
  {\path{doi:10.1006/inco.2001.3088}}.

\bibitem{katz_inferring_2011}
Yael Katz, Kolbj{\o}rn Tunstr{\o}m, Christos~C. Ioannou, Cristi{\'a}n Huepe,
  and Iain~D. Couzin.
\newblock Inferring the structure and dynamics of interactions in schooling
  fish.
\newblock {\em Proceedings of the National Academy of Sciences},
  108(46):18720--18725, November 2011.
\newblock \href {https://doi.org/10.1073/pnas.1107583108}
  {\path{doi:10.1073/pnas.1107583108}}.

\bibitem{korman_early_2022}
Amos Korman and Robin Vacus.
\newblock Early adapting to trends: Self-stabilizing information spread using
  passive communication.
\newblock In {\em Proceedings of the 2022 {{ACM Symposium}} on {{Principles}}
  of {{Distributed Computing}}}, {{PODC}}'22, pages 235--245, {New York, NY,
  USA}, July 2022. {Association for Computing Machinery}.
\newblock \href {https://doi.org/10.1145/3519270.3538415}
  {\path{doi:10.1145/3519270.3538415}}.

\bibitem{miller_quorum_2001}
Melissa~B. Miller and Bonnie~L. Bassler.
\newblock Quorum sensing in bacteria.
\newblock {\em Annual Review of Microbiology}, 55(1):165--199, 2001.
\newblock \href {https://doi.org/10.1146/annurev.micro.55.1.165}
  {\path{doi:10.1146/annurev.micro.55.1.165}}.

\bibitem{moeinifar_zealots_2021}
V.~Moeinifar and S.~G{\"u}nd{\"u}{\c c}.
\newblock Zealots' effect on opinion dynamics in complex networks.
\newblock {\em Mathematical Modeling and Computing}, 8(2):pp. 203, Wed,
  05/05/2021 - 17:53.

\bibitem{mukhopadhyay_binary_2016}
Arpan Mukhopadhyay, Ravi~R. Mazumdar, and Rahul Roy.
\newblock Binary opinion dynamics with biased agents and agents with different
  degrees of stubbornness.
\newblock In {\em 2016 28th {{International Teletraffic Congress}} ({{ITC}}
  28)}, pages 261--269, {W{\"u}rzburg, Germany}, September 2016. {IEEE}.
\newblock \href {https://doi.org/10.1109/ITC-28.2016.143}
  {\path{doi:10.1109/ITC-28.2016.143}}.

\bibitem{sood_voter_2005}
V.~Sood and S.~Redner.
\newblock Voter model on heterogeneous graphs.
\newblock {\em Physical Review Letters}, 94(17):178701, May 2005.
\newblock \href {https://doi.org/10.1103/PhysRevLett.94.178701}
  {\path{doi:10.1103/PhysRevLett.94.178701}}.

\bibitem{sumpter_consensus_2008}
David J.~T. Sumpter, Jens Krause, Richard James, Iain~D. Couzin, and Ashley
  J.~W. Ward.
\newblock Consensus decision making by fish.
\newblock {\em Current Biology}, 18(22):1773--1777, November 2008.
\newblock \href {https://doi.org/10.1016/j.cub.2008.09.064}
  {\path{doi:10.1016/j.cub.2008.09.064}}.

\bibitem{trewavas_plant_2002}
Anthony Trewavas.
\newblock Plant intelligence: Mindless mastery.
\newblock {\em Nature}, 415(6874):841--841, February 2002.
\newblock \href {https://doi.org/10.1038/415841a} {\path{doi:10.1038/415841a}}.

\bibitem{vicsek_collective_2012}
Tam{\'a}s Vicsek and Anna Zafeiris.
\newblock Collective motion.
\newblock {\em Physics Reports}, 517(3):71--140, August 2012.
\newblock \href {https://doi.org/10.1016/j.physrep.2012.03.004}
  {\path{doi:10.1016/j.physrep.2012.03.004}}.

\bibitem{yildiz_binary_2013a}
Ercan Yildiz, Asuman Ozdaglar, Daron Acemoglu, Amin Saberi, and Anna Scaglione.
\newblock Binary opinion dynamics with stubborn agents.
\newblock {\em ACM Transactions on Economics and Computation},
  1(4):19:1--19:30, December 2013.
\newblock \href {https://doi.org/10.1145/2538508} {\path{doi:10.1145/2538508}}.

\end{thebibliography}

\clearpage
\appendix

\section{Well-known Dynamics} \label{sec:dynamics}

In this section, we define two important dynamics. \Cref{prot:voter,prot:minority} below only describe the behaviour of non-source agents, since the source never changes its opinion.
\vspace{1em}

\begin{minipage}{0.47\textwidth}
\begin{algorithm} [H]
    \DontPrintSemicolon
    \caption{Voter dynamics}
    \label{prot:voter}
    \KwIn{An opinion sample $S$ of size~$\ell$.}
    $X_{t+1}^{(i)} \leftarrow $ a random opinion in~$S$ \;
    \vspace{0.8em}
    \textit{(Since~$S$ is already sampled uniformly at random, this protocol yields the same dynamics regardless of the value of~$\ell$, and is usually defined with~$\ell = 1$).}
\end{algorithm}
\end{minipage}
\hfill
\begin{minipage}{0.47\textwidth}
\begin{algorithm} [H]
    \DontPrintSemicolon
    \caption{Minority dynamics~\cite{becchetti_minority_2024}}
    \label{prot:minority}
    \KwIn{An opinion sample $S$ of size~$\ell$.}

    \uIf{all opinions in~$S$ are equal to~$x$}{
        $X_{t+1}^{(i)} \leftarrow x$ \;
    }
    \Else{
        $X_{t+1}^{(i)} \leftarrow $ minority opinion in~$S$ \;
        \textit{(breaking ties randomly).}
    }
\end{algorithm}
\end{minipage}
\vspace{1em}

\noindent In terms of our definition, the Voter dynamics writes
\begin{equation} \label{eq:def_voter}
    g_n^{[0]}(k) = g_n^{[1]}(k) =: g^{\textrm{voter}}(k)=\frac{k}{\ell}, \quad \text{ for every }k\in \{ 0,\ldots,\ell\}.
\end{equation}

Similarly, given that ties are broken u.a.r., the Minority dynamics is given by
\begin{equation} \label{eq:def_minority}
    g_n^{[0]}(k) = g_n^{[1]}(k) =: g^{\textrm{minority}}(k)=
    \begin{cases} 
        1 \quad \text{if } k=\ell \text{ or } 0<k< \frac{\ell}{2},\\
        \frac{1}{2} \quad \text{if } k= \frac{\ell}{2},\\
        0 \quad \text{if } k=0 \text{ or } \frac{\ell}{2}<k<\ell.\\
    \end{cases}
\end{equation}

\section{Probabilistic Tools} \label{sec:probabilistic_tools}

\begin{theorem} [Hoeffding's bound] \label{thm:additive_chernoff_bound}
    Let~$X_1,\ldots,X_n$ be i.i.d. random variables taking values in $\{0,1\}$, let~$X = \sum_{i=1}^n X_i$ and~$\mu = \E(X) = n \Pr(X_1 = 1)$. Then it holds for all~$\delta > 0$ that
    \begin{equation*}
        \Pr \pa{ X \leq \mu - \delta } , \Pr \pa{ X \geq \mu + \delta } \leq \exp \pa{ -\frac{ 2 \delta^2}{n} }.
    \end{equation*}
\end{theorem}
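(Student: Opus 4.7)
The plan is to prove Hoeffding's bound via the standard Chernoff/exponential-moment method, applied to the centered variables $Y_i := X_i - p$ where $p := \Pr(X_1 = 1)$, so that $Y_i \in [-p, 1-p]$ has mean $0$ and range exactly $1$. I would focus first on the upper tail $\Pr(X \geq \mu + \delta)$; the lower tail follows by applying the same argument to $1 - X_i$ (or equivalently to $-Y_i$).

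The first step is Markov's inequality in its exponential form: for any $s > 0$,
\begin{equation*}
    \Pr(X \geq \mu + \delta) = \Pr\pa{e^{s(X-\mu)} \geq e^{s\delta}} \leq e^{-s\delta} \, \E\pa{e^{s(X-\mu)}}.
\end{equation*}
By independence of the $X_i$'s (hence of the $Y_i$'s), the moment generating function factorizes:
\begin{equation*}
    \E\pa{e^{s(X - \mu)}} = \prod_{i=1}^n \E\pa{e^{s Y_i}}.
\end{equation*}

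The crux of the argument is then to bound $\E(e^{sY_i})$ for each centered bounded $Y_i$. I would prove the following version of \emph{Hoeffding's lemma}: if $Y$ is a zero-mean random variable supported on $[a,b]$, then $\E(e^{sY}) \leq \exp(s^2 (b-a)^2/8)$. The standard way to show this is to use convexity of $x \mapsto e^{sx}$ to write, for any $y \in [a,b]$,
\begin{equation*}
    e^{sy} \leq \frac{b-y}{b-a} e^{sa} + \frac{y-a}{b-a} e^{sb},
\end{equation*}
take expectations (using $\E(Y) = 0$) to get $\E(e^{sY}) \leq \tfrac{b}{b-a} e^{sa} - \tfrac{a}{b-a} e^{sb}$, and then show that the logarithm $\varphi(s)$ of the right-hand side satisfies $\varphi(0) = \varphi'(0) = 0$ and $\varphi''(s) \leq (b-a)^2/4$ for all $s$ (by viewing $\varphi''$ as the variance of a suitable tilted random variable on $[a,b]$, which is bounded by $(b-a)^2/4$ by Popoviciu's inequality on variances). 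Integrating twice yields $\varphi(s) \leq s^2(b-a)^2/8$.

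Applying this lemma with $b - a = 1$ gives $\E(e^{sY_i}) \leq e^{s^2/8}$, and thus $\E(e^{s(X-\mu)}) \leq e^{n s^2/8}$. Plugging back into the Markov bound yields $\Pr(X \geq \mu + \delta) \leq \exp(ns^2/8 - s\delta)$. Finally, I would optimize over $s > 0$: the exponent is minimized at $s = 4\delta/n$, giving the claimed bound $\exp(-2\delta^2/n)$. The lower-tail inequality follows by replacing $X_i$ by $1 - X_i$ (also $\{0,1\}$-valued and i.i.d.), turning a lower-deviation event for $X$ into an upper-deviation event of the same magnitude. The only slightly technical step is the variance bound $\varphi''(s) \leq (b-a)^2/4$ in Hoeffding's lemma; everything else is a routine Chernoff computation.
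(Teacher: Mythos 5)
The paper states this result as a standard tool in its ``Probabilistic Tools'' appendix and gives no proof of its own, so there is nothing to compare against. Your argument is the standard and correct one --- exponential Markov inequality, factorization by independence, Hoeffding's lemma for the zero-mean variables $Y_i = X_i - p$ with range $1$, and optimization at $s = 4\delta/n$ giving the exponent $-2\delta^2/n$ --- and the reduction of the lower tail to the upper tail via $1 - X_i$ is also fine.
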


The following version of Azuma's inequality, which accounts for the possibility that the martingale makes a large jump with a small probability, appears in~\cite[Section 8, p.34]{chung_concentration_2006}.
\begin{theorem} [Azuma-Hoeffding inequality] \label{thm:generalized_azuma_hoeffding}
    Let~$(X_t)_{t \in \bbN}$ be a martingale, and let $T \in \bbN$.
    If there is $p > 0$ and $c_1,\ldots,c_T$ such that
    \begin{equation*}
        \Pr \pa{ \exists t \leq T, X_t - X_{t-1} > c_t } \leq p,
    \end{equation*}
    then for every~$\delta>0$,
    \begin{equation*}
    	\Pr \pa{|X_T - X_0|> \delta} \leq 2 \exp \pa{-\frac{\delta^2}{2 \sum_{t=1}^T c_t^2}} + p.
    \end{equation*}
\end{theorem}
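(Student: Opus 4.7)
The plan is to reduce this generalized inequality to the classical Azuma--Hoeffding theorem, which requires almost-surely bounded martingale differences. The whole point of introducing the failure probability~$p$ is precisely that, on a good event of probability at least $1-p$, the increments of $X_t$ do fit the classical framework, while the complement contributes at most $p$ to the tail probability we want to bound. So the strategy is to couple $X_t$ with a related martingale that has bounded increments by construction, apply the classical inequality to that coupled process, and then control the coupling error via a single union bound at the end.

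First I would define the stopping time
\[
    \tau := \inf\{t \geq 1 \,:\, X_t - X_{t-1} > c_t\},
\]
with the convention $\tau = +\infty$ if no such $t$ exists. By the hypothesis of the theorem, $\Pr(\tau \leq T) \leq p$. (I interpret the hypothesis symmetrically, as a bound on $|X_t - X_{t-1}|$, which is the natural two-sided reading and is consistent with the way the theorem is applied in the proof of \Cref{claim:trapping_martingale}; a literal one-sided hypothesis would only yield a one-sided tail bound.) I would then consider the stopped process $\widetilde{X}_t := X_{t \wedge \tau}$. By the optional stopping theorem applied to the bounded stopping times $t \wedge \tau$, the process $\widetilde{X}_t$ is itself a martingale starting from $X_0$; moreover, by definition of $\tau$, its increments satisfy $|\widetilde{X}_t - \widetilde{X}_{t-1}| \leq c_t$ almost surely.

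Next, I would invoke the classical Azuma--Hoeffding inequality on $\widetilde{X}_t$ to obtain
\[
    \Pr\pa{|\widetilde{X}_T - X_0| > \delta} \leq 2 \exp\pa{-\frac{\delta^2}{2 \sum_{t=1}^T c_t^2}}.
\]
Observing that on the event $\{\tau > T\}$ the two processes coincide up to time $T$ (so $X_T = \widetilde{X}_T$), a union bound yields
\[
    \Pr\pa{|X_T - X_0| > \delta} \leq \Pr\pa{|\widetilde{X}_T - X_0| > \delta} + \Pr(\tau \leq T),
\]
and combining the two previous displays gives exactly the claimed inequality.

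The main obstacle I anticipate is the subtlety around one-sided versus two-sided increment control. The classical Azuma--Hoeffding inequality requires the symmetric bound $|\widetilde{X}_t - \widetilde{X}_{t-1}| \leq c_t$; reading the theorem's hypothesis strictly as written, only the upward jump is controlled on the good event, and the stopping-time argument produces a process with bounded positive but a priori unbounded negative jumps. The cleanest resolution is the symmetric reading adopted above. Alternatively, one could replace the classical ingredient by a one-sided exponential-moment argument (using Hoeffding's lemma in its asymmetric form on the stopped process), which is in the spirit of the route taken by the Chung--Lu survey cited by the authors; the overall coupling skeleton would remain unchanged.
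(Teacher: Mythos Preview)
The paper does not prove this statement: it is quoted as a tool and attributed to the Chung--Lu survey, so there is no in-paper argument to compare against. I can still assess your proposal on its own.

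Your overall strategy (split off the bad event, apply classical Azuma--Hoeffding on a coupled process, union-bound) is the right shape, and your discussion of the one-sided versus two-sided reading is sensible. However, the central step contains a genuine gap. The stopped process $\widetilde{X}_t = X_{t\wedge\tau}$ is indeed a martingale by optional stopping, but it does \emph{not} have increments bounded by $c_t$ almost surely. On the event $\{\tau = t\}$ we have
\[
    \widetilde{X}_t - \widetilde{X}_{t-1} \;=\; X_{t\wedge\tau} - X_{(t-1)\wedge\tau} \;=\; X_\tau - X_{\tau-1},
\]
and by the very definition of $\tau$ this increment exceeds $c_\tau$ in absolute value. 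Stopping at $\tau$ keeps the first bad jump; to discard it you would need to freeze at $\tau-1$, but $\tau-1$ is not a stopping time with respect to the natural filtration (whether the jump at time $t$ is bad is only revealed at time $t$). Consequently the classical Azuma--Hoeffding inequality cannot be invoked on $\widetilde{X}$ as written.

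This is precisely why the Chung--Lu argument is more delicate than a bare optional-stopping reduction: it works directly with the exponential moment $\E\!\left[e^{\lambda(X_T-X_0)}\,\bbOne_{G}\right]$ on the good event $G=\{\forall s\le T,\ |X_s-X_{s-1}|\le c_s\}$, carrying the indicator through the iterated conditioning and applying Hoeffding's lemma step by step. In most formulations an additional crude almost-sure bound $|X_t-X_{t-1}|\le M$ is also assumed to make this go through cleanly; in the paper's application (\Cref{claim:trapping_martingale}) such a bound is automatically available since the underlying chain takes values in $\{0,\dots,n\}$. Your outline would become correct if you replaced the stopped-martingale step by this direct MGF computation, keeping the rest of the skeleton intact.
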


\section{Upper Bound for the Voter Dynamics} \label{sec:voter_upper}

The following proof involves only classical arguments (see, for instance, \cite{yildiz_binary_2013a} and \cite[Section 2.4]{hassin_distributed_2001}).
We nonetheless present it here for the sake of completeness.

\begin{proof} [Proof of \Cref{thm:secondary}]
The reader who is not already familiar with the idea is strongly encouraged to refer to \Cref{fig:dual_process} for an illustration.

Without loss of generality, we consider the Voter dynamics with~$\ell = 1$. In this case, the sample~$S_t^{(i)}$ of a non-source agent~$i \neq 1$ in round~$t$ is simply an element of the set~$I = \{1,\ldots,n\}$ of all agents, drawn uniformly at random.
In the case of the source, for the sake of the argument, we let~$S_t^{(1)} = 1$ for every~$t \in \bbN$, i.e., we consider that the source agent applies the Voter rule but always samples itself.

Now, let us fix an horizon~$T$. We will proceed by examining $n$ random walks~$\{W_{T-t}^{(i)}\}_{t \leq T}$, defined on the same randomness, but for which the time flows backward.
Specifically, for every~$i \in I$, let~$W_T^{(i)} = i$, i.e., every random walk ``starts'' at a different position.
Moreover, for $t<T$, let
\begin{equation*}
	W_t^{(i)} := S_t^{\pa{W_{t+1}^{(i)}}}.
\end{equation*}
In other words, if a random walk is in position~$j$ in round~$t+1$, and if Agent~$j$ samples~$i$ in round~$t$, then the random walk ``moves'' to~$i$ in round~$t$.
Note that by definition, if a random walk moves to position~$1$ in round~$t$, it will remain in position~$1$ for all remaining rounds:
\begin{equation} \label{eq:sink}
	W_t^{(i)} = 1 \implies \forall s \in \{1,\ldots,t\}, W_s^{(i)} = 1. 
\end{equation}
Moreover, if the random walk indexed by~$i$ ends up in position~$1$ (in round~$1$), it implies that the agent of index~$W_t^{(i)}$ holds the correct opinion in round~$t$:
\begin{equation} \label{eq:correctness}
	W_1^{(i)} = 1 \implies X_t^{\pa{W_t^{(i)}}} = z.
\end{equation}
To show that, we can simply proceed by induction on~$t$, using the definition of the random walks.
\Cref{eq:sink} and \Cref{eq:correctness} together implies that if the random walks indexed by~$i$ ever moves to position~$1$, then Agent~$i$ has the correct opinion in round~$T$:
\begin{equation} \label{eq:dual}
	\exists t \leq T, W_t^{(i)} = 1 \implies X_T^{(i)} = z.
\end{equation}
Therefore, for every~$i \neq 1$,
\begin{align*}
	\Pr \pa{ X_T^{(i)} \neq z } &\leq \Pr \pa{ \forall t \leq T, W_t^{(i)} \neq 1 } & \text{(by \Cref{eq:dual})} \\
	&\leq \prod_{t=0}^{T-1} \Pr \pa{ S_t^{\pa{W_{t+1}^{(i)}}} \neq 1 \mid W_{t+1}^{(i)} \neq 1 } & \text{(by the chain rule)} \\
	&= \pa{1-\frac{1}{n}}^T. &\text{(uniform and independent samples)}
\end{align*}
Note that this bound holds trivially for~$i=1$.
Therefore, by the union bound,
\begin{equation*}
	\Pr \pa{ \forall i \in I, X_T^{(i)} = z } = 1 - \Pr \pa{\exists i \in I, X_T^{(i)} \neq z} \geq 1 - \sum_{i \in I} \Pr \pa{ X_T^{(i)} \neq z} \geq 1 - n \pa{1-\frac{1}{n}}^T.
\end{equation*}
Taking~$T = 2 n \log n$, for $n$ large enough,
\begin{equation*}
    \pa{1-\frac{1}{n}}^{2 n \log n} = \exp \pa{2n\log n \log \pa{1-1/n}} \leq \exp \pa{-2 \log n} = \frac{1}{n^2},
\end{equation*}
which concludes the proof of \Cref{thm:secondary}.
\end{proof}

\begin{figure} [htbp]
    \centering
    \includegraphics[width=0.55\linewidth]{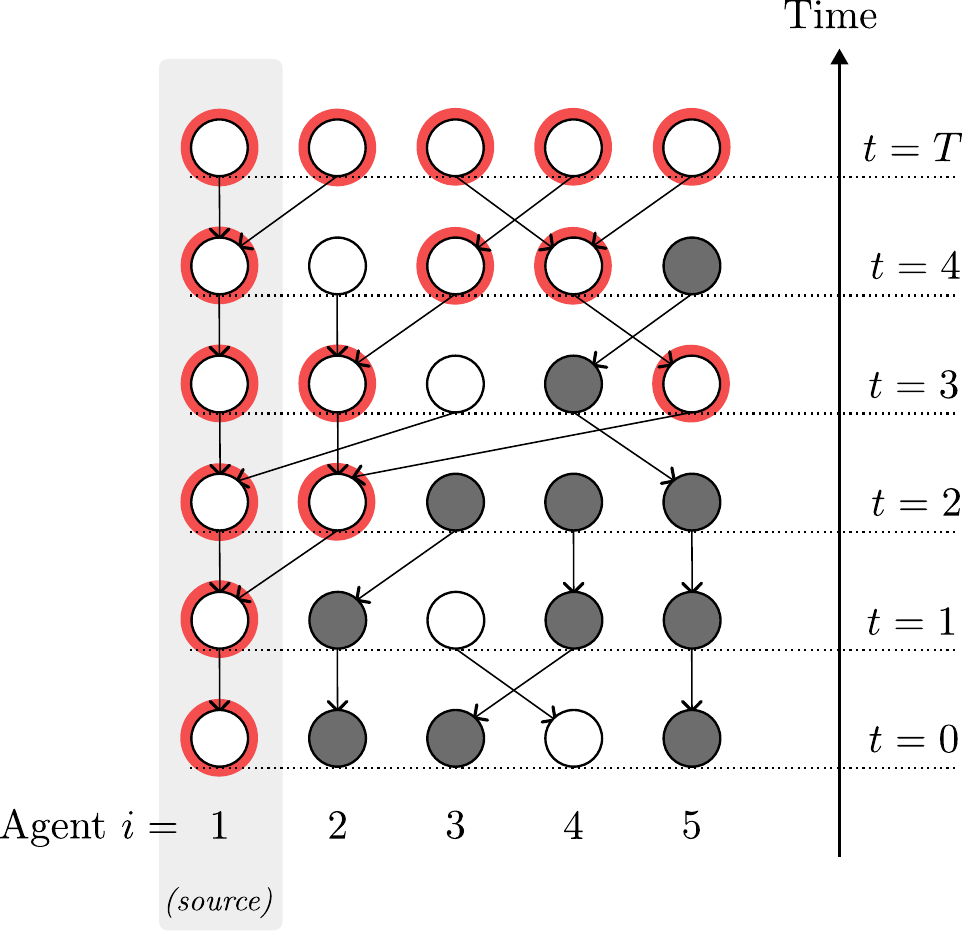}
    \caption{\textbf{Depiction of the dual process behind the proof of \Cref{thm:secondary}.}
    The color of the circle in row~$t$, column~$i$, corresponds to the opinion of Agent~$i$ in round~$t$: it is black if~$X_t^{(i)} = 1$, and white otherwise.
    An arrow is drawn from $(i,t+1)$ to $(j,t)$ if $S_t^{(i)} = j$, i.e., if Agent~$i$ observes Agent~$j$ in round~$t$ (and thus adopts their opinion in round~$t+1$).
    Red circles depict the locations of~$n$ coalescing random walks going backward in time, and initially present at every location. Random walks at a location~$i > 1$ make a move using the same randomness as the samples, while the source acts like a sink.
    If all random walks have coalesced in less than~$T$ rounds, it implies that the opinion of each agent in round~$T$ comes from the source, and thus that the dynamics has reached consensus on the correct opinion.
    }
    \label{fig:dual_process}
\end{figure}

\section{Missing Proofs} \label{sec:missing_proofs}

\begin{claim} \label{claim:polynomial}
    For every~$M,d$, there exists $C_0 = C_0(M,d) > 0$ s.t. for every polynomial~$P$ of degree $d$ and coefficients bounded by~$M$, every~$a,b \in [0,1]$ with $P(a) = P(b) = 0$, and every~$x \in [a,b]$, $P(x) < C_0 \cdot (b-a)$.
\end{claim}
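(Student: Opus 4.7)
The plan is to reduce the claim to a uniform bound on the derivative of $P$ via the Mean Value Theorem. Since $P(a) = 0$, for every $x \in [a,b]$ we may write $P(x) = P(x) - P(a) = P'(\xi)(x - a)$ for some $\xi \in [a,x] \subseteq [0,1]$, so that
\begin{equation*}
    |P(x)| \leq |P'(\xi)| \cdot (x-a) \leq |P'(\xi)| \cdot (b - a).
\end{equation*}

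It then remains to bound $|P'|$ uniformly on $[0,1]$ in terms of $M$ and $d$ only. Writing $P(x) = \sum_{k=0}^{d} a_k x^k$ with $|a_k| \leq M$, the derivative is $P'(x) = \sum_{k=1}^{d} k \, a_k \, x^{k-1}$, and the triangle inequality together with $\xi \in [0,1]$ gives
\begin{equation*}
    |P'(\xi)| \leq \sum_{k=1}^{d} k \, |a_k| \leq M \cdot \frac{d(d+1)}{2}.
\end{equation*}
Combining the two displays, $|P(x)| \leq \bigl( M d(d+1)/2 \bigr) \cdot (b-a)$ for all $x \in [a,b]$. Setting $C_0 := M d(d+1)/2 + 1$ delivers the strict inequality claimed in the statement (when $a < b$; the degenerate case $a = b$ does not arise in the application to $a = r_n^{(k_0)}$, $b = 1$, since $r_n^{(k_0)}$ only meets $1$ trivially, in which case $F_n$ vanishes identically on the interval anyway).

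There is essentially no obstacle here: the only mild subtlety is to observe that the hypothesis ``coefficients bounded by $M$'' together with $[a,b] \subseteq [0,1]$ lets us control $P'$ uniformly without any reference to the specific locations of the two roots $a$ and $b$. The argument uses nothing but the Mean Value Theorem and the triangle inequality, and it is crucial that the degree $d$ is fixed (coming in our application from the bound on the sample size $\ell$), since the bound $Md(d+1)/2$ grows with $d$.
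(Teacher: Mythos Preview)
Your proof is correct and follows essentially the same idea as the paper: bound $|P'|$ uniformly on $[0,1]$ in terms of $M$ and $d$, then use the Mean Value Theorem (or equivalently, Lipschitzness) together with a root of $P$. The only cosmetic differences are that the paper splits $[a,b]$ at its midpoint and uses both roots $a$ and $b$ (yielding a constant half as large), whereas you use only the root at $a$; and that you make the derivative bound explicit as $Md(d+1)/2$ where the paper merely asserts its existence.
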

\begin{proof}
    Since $P$ has degree $d$ and coefficients bounded by~$M$, there exists~$C = C(M,d)$ s.t. $|P'(x)| < C$ on $[0,1]$. Therefore, for every~$x \in [a,(a+b)/2]$, we have
    \begin{equation*}
        |P(x)| = |P(x) - P(a)| < C \cdot (x-a) < C \cdot \frac{b-a}{2}.
    \end{equation*}
    Similarly, for every~$x \in [(a+b)/2,b]$, we have
    \begin{equation*}
        |P(x)| = |P(b) - P(x)| < C \cdot (b-x) < C \cdot \frac{b-a}{2}.
    \end{equation*}
    Taking~$C_0 = C/2$ concludes the proof of \Cref{claim:polynomial}.
\end{proof}

\begin{proof}[Proof of \Cref{thm:voter_lower}]
Let~$\varepsilon>0$.
We want to apply \Cref{lem:main} with~$a_1=1/4$, $a_2=1/2$, $a_3=3/4$. If the three hypotheses hold, we have for the initial configuration $C_n := (z = 1, X_0 = \frac{a_2\,n+a_3\,n}{2})$,  
\[
    \tau_n \pa{\prot, C_n} \geq \inf\{t\in \bbN, X_t>a_3 n\}\geq n^{1-\varepsilon}.
\]
Now, let us show that the hypotheses hold:
\begin{itemize}
    \item \textbf{Proving (i).}
    Since $F_n = 0$ for $n$ large enough, we have by \Cref{lem:expectation_bounds} that $\E(X_{t+1} \mid X_t = x_t) \leq x_t+1$.
    \item \textbf{Proving (iii).} Conditioning on $X_t$, $X_{t+1}$ is a sum of~$n$ independent Bernoulli random variables. Therefore, we can use \nameref{thm:additive_chernoff_bound} to obtain
    \[
        \Pr(|X_{t+1}-\E\pa{X_{t+1} \mid X_t}| > n^{1/2 + \varepsilon/4}) < 2\exp \pa{ -2n^{\varepsilon/2}},
    \]
    which establishes (iii). 
    \item \textbf{Proving (ii).} If $x_t<a_1\, n$ and $n$ is large enough, (ii) follows again from \nameref{thm:additive_chernoff_bound}:
    \begin{align*}
    	\Pr(X_{t+1} > a_2 n \mid X_{t}=x_t) &\leq \Pr(X_{t+1} > n^{1/2 + 1/4} +\E\pa{X_{t+1} \mid X_t=x_t} \mid X_{t}=x_t) \\
    	&< 2\exp \pa{ -2n^{1/2}},
    \end{align*} 
\end{itemize}
which establishes (ii) and concludes the proof of \Cref{thm:voter_lower}.
\end{proof}

\end{document}